\newtheorem{theorem}{Theorem}[section]
\newtheorem{remark}{Remark}[section]
\newtheorem{proposition}{Proposition}[section]
\begin{document}	
	
\title{Global Stability of a Diffusive SEIR Epidemic Model\\ with Distributed Delay\thanks{This is a 
preprint whose final form is published by Elsevier in the book 
'Mathematical Analysis of Infectious Diseases', 1st Edition -- June 1, 2022, 
ISBN: 9780323905046.}}

\author{Abdesslem Lamrani Alaoui$^1$\\ 
\texttt{abdesslemalaoui@gmail.com}
\and Moulay Rchid Sidi Ammi$^2$\\ 
\texttt{rachidsidiammi@yahoo.fr}
\and Mouhcine Tilioua$^1$\\ 
\texttt{m.tilioua@umi.ac.ma}
\and Delfim F. M. Torres$^{3,}$\thanks{Corresponding author.}\\ 
\texttt{delfim@ua.pt}}
	
\date{$^1$MAMCS Group, MAIS Laboratory, FST Errachidia,\\
Moulay Isma\"{\i}l University of Mekn\`{e}s, Morocco\\[0.3cm]
$^2$AMNEA Group, MAIS Laboratory, FST Errachidia,\\
Moulay Isma\"{\i}l University of Mekn\`{e}s, Morocco\\[0.3cm]
$^3$Center for Research and Development in Mathematics and Applications (CIDMA),
Department of Mathematics, University of Aveiro, 3810-193 Aveiro, Portugal}

\maketitle


\begin{abstract}
We study the global dynamics of a reaction-diffusion SEIR infection model 
with distributed delay and nonlinear incidence rate. The well-posedness 
of the proposed model is proved. By means of Lyapunov functionals, we show 
that the disease free equilibrium state is globally asymptotically stable 
when the basic reproduction number is less or equal than one, 
and that the disease endemic equilibrium is globally asymptotically stable 
when the basic reproduction number is greater than one. Numerical simulations 
are provided to illustrate the obtained theoretical results.

\medskip

\noindent {\bf Keywords}: diffusive epidemic model,
distributed delay, 
generalized nonlinear incidence rate, 
Lyapunov functionals, 
reaction-diffusion.

\medskip

\noindent {\bf MSC}: 34K20, 92D30.
\end{abstract}	
	

\section{Introduction}
	
It is well-known that mathematical models help the understanding  
of disease dynamics, giving suggestions for the control of the 
spread of diseases in a population, both in time and space.
It turns our that the spatial spread of many human diseases 
is affected by how, where, and when people are moving. For instance, 
it has been proved that human movement has played a key role 
in the dynamics of influenza \cite{SAMSUZZOHA20103461,SAMSUZZOHA20115507} 
and malaria \cite{Bai2018}. Movement affects pathogen dynamics in two main ways: 
it may introduce pathogens into susceptible populations 
or it may increase the contact between susceptible and infected individuals. 
This means that individuals are also affected by diseases transmission
on the basis of social, demographic, and geographic factors.

In the literature, there are some mathematical studies that investigate the influence 
of the spatial aspect of host populations on the dynamics of diseases  
\cite{Banerjee2016,Hwang2013,Lou2011,Wang2018,Xu2017,Nauman2019,Jinliang2017,KIM20131992}. 
However, many authors still propose models in which it is assumed
the environment to be uniformly mixed, without taking into account 
the location or mobility of the populations 
\cite{CAPASSO197843,Abdesslem,enatsu2012lyapunov,MR2762609,MR2130156,%
MR2879132,WANG20102390,gumel2003qualitative,LI2014100,Ruan2003135,Elazzouzi}. 
Thus, it is appropriate to investigate the spacial dimension into many 
of the available models.  

In recent years, the spatial transmission dynamics of delayed models 
has attracted the attention of many researchers \cite{Jing2011,Xia2018}. 
In \cite{CONNELLMCCLUSKEY201564}, McCluskey and Yang propose a model of virus 
dynamics that includes diffusion and time delay. They show that the 
equilibria of the system are globally asymptotically stable. 
In \cite{Jinliang2017}, Kuniya and Wang consider a spatially 
diffusive SIR epidemic model and discuss the global stability 
analysis of equilibria for two special cases: the case of no diffusive 
susceptible individuals and that of no diffusive infective individuals. 
Xu and Chen study the dynamics of an SIS epidemic model with diffusion 
\cite{Xu2017}. First, they establish the well-posedness of the model. 
Then, by using the linearization method and constructing a suitable 
Lyapunov function, they show the local and global stability of 
the disease-free equilibrium and of the endemic equilibrium, respectively. 
In \cite{junjie2020}, Yang and Wei investigate  a delayed reaction-diffusion 
virus model with a general incidence function and spatially dependent parameters. 
They derive the basic reproduction number for the model 
and prove the uniform persistence of solutions 
and the global interactivity of the equilibria. 

Motivated by the discussions above and the work \cite{McCluskey2010837}
of McCluskey, here we focus ourselves on the global stability analysis of 
a general SEIR epidemiological model with diffusion and distributed delay. 
In some sense, the present work can be viewed as a continuation 
and generalization of \cite{McCluskey2010837},
where an SIR disease model is investigated.
In contrast, here we study a generalized SEIR epidemic model 
with distributed delay and a nonlinear incidence function.  
Moreover, it is necessary to point out that the delay in our model 
represents the incubation time taken to become infectious.  
Our goal is to investigate the impact of the spatial dimension 
on the dynamic behavior of the considered model. Furthermore, 
we discuss the global stability of the model near equilibria 
(the disease-free equilibrium $E_0$ and the disease-endemic 
equilibrium $E^*$) by means of Lyapunov's method. Finally, 
to illustrate the obtained theoretical results, 
some numerical simulations are carried out.

The text is organized as follows. The mathematical model to be studied
is formulated in Section~\ref{sec:2}. In Section~\ref{sec:3}, we provide 
a mathematical analysis of the considered model. More precisely, 
we show that the model is well-posed, we compute the basic reproduction number $R_0$
and the equilibria, proving their global stability. In Section~\ref{sec:4},
a numerical example, with an incidence function satisfying the assumptions considered, 
is given and discussed. We finish with Section~\ref{sec:5}, 
providing some concluding remarks.


\section{Mathematical model}
\label{sec:2}

We are interested in a general SEIR epidemic model 
with distributed delay and diffusion. The dynamics 
is governed by the following system of equations:
\begin{equation}
\label{Model}
\left\{
\begin{array}{lll}
\dfrac{\partial S(x,t)}{\partial t} - k_S \Delta S(x,t)
= b- \mu S(x,t)- \beta \displaystyle \int_0^{h}
g(\tau)f(S(x,t),I(x,t-\tau))d\tau, \quad x\in\Omega,\\[2ex]
\dfrac{\partial E(x,t)}{\partial t} - k_E \Delta E(x,t)
= \beta \displaystyle \int_0^{h}g(\tau)f(S(x,t),I(x,t-\tau))d\tau 
-(\mu+\alpha) E(x,t), \quad x\in\Omega,\\[2ex]
\dfrac{\partial I(x,t)}{\partial t} - k_I \Delta I(x,t)
= \alpha E(x,t) -(\mu+c+\gamma)I(x,t), \quad x\in\Omega,\\[2ex]
\dfrac{\partial R(x,t)}{\partial t}- k_R \Delta R(x,t)
=\gamma I(x,t)- \mu R(x,t), \quad x\in\Omega,\\[2ex]
\displaystyle \frac{\partial S(x,t)}{\partial \nu}
=\frac{\partial E(x,t)}{\partial \nu}
=\frac{\partial I(x,t)}{\partial \nu}
=\frac{\partial R(x,t)}{\partial \nu}=0,
\quad x\in\partial\Omega,
\end{array}
\right.
\end{equation}
where $t>0$; $\Omega$ is a bounded domain in $\mathbb{R}^n$ 
with smooth boundary $\partial\Omega$; $\nu$ is the outward 
normal to $\partial\Omega$; $k_S>0$, $k_E>0$, $k_I>0$ and
$k_R>0$ stand for the diffusion rates; $S(x,t)$, $E(x,t)$, 
$I(x,t)$ and $R(x,t)$ denote the number of susceptible, 
exposed, infected and recovered individuals at time $t$ 
in position $x$, respectively; $b$ is the recruitment 
rate of the population; $\mu$ is the natural death rate 
of the population; $\gamma$ represents the natural recovery 
rate of infective individuals; $c$ is the death rate of the 
population caused by the infection; and $\beta$ represents the 
transmission coefficient. Individuals leave the susceptible class 
at a rate 
$$ 
\int_0^{h}g(\tau)f(S(x,t),I(x,t-\tau))d\tau,  
$$
where $h$ represents the maximum time taken to become infectious and 
$g$ is a non-negative function satisfying 
$\displaystyle \int_0^{h}g(\tau)d\tau = 1$.
	
The initial condition for the above system 
is given for $\theta \in[-h,0] $ by
\begin{equation*}
\begin{split}
\Phi(\theta)(x)
&= \left(\Phi_1 (x,\theta ), \Phi_2 (x,\theta ), 
\Phi_3 (x,\theta ), , \Phi_4 (x,\theta )\right)\\
&=\left(S(x,\theta), E(x,\theta), I(x,\theta), R(x,\theta)\right),
\quad  x \in\overline{\Omega},
\end{split}
\end{equation*}
with $\Phi \in~ C([-h,0], \mathbb{X})$.  Here, $C([-h,0], \mathbb{X})$ 
denotes the space of continuous functions mapping from $ [-h,0] $ 
to $\mathbb{X}$ equipped with the sup-norm and 
$\mathbb{X}=C(\Omega, \mathbb{R}^{+4})$ denotes the space 
of continuous functions mapping from $\Omega $ to $\mathbb{R}^{+4}$.
	
Our main objective is to discuss the global stability of the SEIR model \eqref{Model}. 
For that, we will construct suitable Lyapunov functions.
	
Throughout this work, we assume that $f:\mathbb{R}^2_+ \rightarrow \mathbb{R}^{+}$ 
is continuously differentiable in the interior of $ \mathbb{R}^{+} $ with
$$
f(0, I) = f(S, 0) = 0 \mbox{ \ for\ } S, I\geqslant 0
$$  
and the following hypotheses hold:
\begin{enumerate}
\item[$(H_1)$] $f(S,I)$ is a strictly monotone increasing 
function of $S \geqslant 0$ for any fixed $I > 0$
and a monotone increasing function of $I > 0$ 
for any fixed $S \geqslant 0$;

\item[$(H_2)$] $\phi(S, I) = \dfrac{f(S,I )}{I}$ 
is a bounded and monotone decreasing function of $I > 0$ 
for any fixed $ S \geqslant 0 $ and 
$k(S) = \lim\limits_{I \rightarrow 0^+} \phi(S, I)$ 
is a continuous and monotone increasing function on $S \geqslant 0$. 
\end{enumerate}		


\section{Analysis of the model}
\label{sec:3}

In this section, we show that our model \eqref{Model} is well-posed 
(Section~\ref{sec:3.1}), we compute its equilibria and its 
basic reproduction number $R_0$ (Section~\ref{sec:3.2})
and prove the global stability of the disease free 
(Section~\ref{sec:3.3}) and endemic (Section~\ref{sec:3.4})
equilibrium points. 


\subsection{Well-posedness}
\label{sec:3.1}

Let $A$ be the operator defined on $\mathbb{X}$ as follows:
\begin{equation}
\label{4.3}
\begin{tabular}{llll}
$A:$ &$D(A) \subset \mathbb{X}$& $\longrightarrow$ & $\mathbb{X}$\\
& $u$ & $\longmapsto$ &$Au(x)=(K_{S} \Delta u_1, K_{E} \Delta u_2, 
K_{I} \Delta u_3, K_{R} \Delta u_4)$,
\end{tabular}
\end{equation}
where 
$$ 
D(A):=\left\{u\in \mathbb{X}: \Delta u \in \mathbb{X}, 
\dfrac{\partial u}{\partial \nu}=0 \ \mbox{on} \ \partial \Omega \right\}.
$$	
Then, $A$ is the infinitesimal generator of a strongly continuous 
semi-group  $ \exp(tA)$ in $\mathbb{X}$. For any function  
$u : [-h, \sigma ) \longrightarrow \mathbb{X}$ with some  $\sigma> 0$, 
we define $u_t \in C([-h,0], \mathbb{X})$ by 
$u_t(\theta ) = u(t + \theta )$, $\theta \in [-h, 0]$.

Let $F$ be a function defined by
$$
\begin{array}{cccc}
F : &C([-h, 0], \mathbb{X} )&\longrightarrow &\mathbb{X}\\
&\phi& \longmapsto &F(\phi),
\end{array}
$$
where
\begin{equation*}
F(\phi)=F(\phi_1, \phi_2, \phi_3, \phi_4) 
= \left(
\begin{array}{ccc}
b- \mu \phi_1(x,0)- \beta 
\displaystyle \int_0^{h}g(\tau)f(\phi_1(x,0),\phi_3(x,-\tau))d\tau\\[2ex]
\beta \displaystyle \int_0^{h}g(\tau)f(\phi_1(x,0),\phi_3(x,-\tau))d\tau 
-(\mu+\alpha) \phi_2(x,0)\\[2ex]
\alpha \phi_2(x,0) -(\mu+c+\gamma)\phi_3(x,0)\\[2ex]
\gamma \phi_3(x,0)- \mu \phi_4(x,0)
\end{array}
\right).
\end{equation*} 
Function $F$ is locally Lipschitzian on $C([-h,0], \mathbb{X})$. 
In addition, the system \eqref{Model} can be written 
in the following abstract form:
\begin{equation}
\label{4.5}
\left\{
\begin{array}{lll}
\dfrac{du(t)}{dt}=Au(t) + F(u_t), \quad t>0, \\[2ex]
u_0= \Phi,
\end{array}
\right.
\end{equation}
where $u(t) = \left(S(\cdot, t), E(\cdot, t), I(\cdot, t), R(\cdot, t)\right)^T$ 
and $\Phi = \left(S(\cdot, 0), E(\cdot, 0), I(\cdot, 0), R(\cdot, 0)\right)^T$. 	

According to the well-known theory of differential delay equations, see e.g. 
\cite{hale,hale2}, the following existence and uniqueness result holds.

\begin{proposition}[See \cite{hale,hale2}]
\label{prop:ex:uniq}
Let $A$ be defined by \eqref{4.3}. For each $u_0 \in  \mathbb{X}$ 
there exists a unique solution  $u: [0, T_{max}] \rightarrow \mathbb{X}$ 
of system \eqref{4.5} on the maximal interval $[0, T_{max}]$ such that
$$
u(t) = T(d t)\phi(0) + \int_{0}^{t} T(d (t-s))F(u_s)ds, 
\quad t\geq 0,
$$
where either $T_{max}= +\infty$ or 
$\lim\limits_{t\rightarrow T_0} \sup \lVert u(t)\lVert_\mathbb{X} = +\infty$. 
\end{proposition}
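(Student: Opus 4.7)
The plan is to establish the result by recasting the delayed reaction-diffusion system \eqref{4.5} as an abstract semilinear Cauchy problem with delay on $\mathbb{X}$, and then invoking the standard Banach fixed-point argument for mild solutions, just as in \cite{hale,hale2}. First I would verify that the operator $A$ defined in \eqref{4.3} generates a strongly continuous semigroup $T(t)$ on $\mathbb{X}$: since $A$ is a diagonal Neumann Laplacian with strictly positive diffusion coefficients $k_S,k_E,k_I,k_R$, each diagonal entry is the generator of an analytic (in particular $C_0$) semigroup on $C(\overline{\Omega})$, hence $A$ generates a $C_0$-semigroup on the product space $\mathbb{X}$ which moreover leaves the positive cone invariant.

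Next, I would rewrite \eqref{4.5} in the equivalent integral (mild) form
\begin{equation*}
u(t) = T(t)\Phi(0) + \int_0^t T(t-s) F(u_s)\, ds, \qquad t \geqslant 0,
\end{equation*}
with the understanding that $u_t(\theta)=u(t+\theta)$ for $\theta\in[-h,0]$ and $u_0=\Phi$. The key property I would exploit is that $F:C([-h,0],\mathbb{X})\to\mathbb{X}$ is locally Lipschitz. This follows because $F$ depends on $\phi$ only through pointwise evaluations $\phi_i(x,0)$ and through the integral $\int_0^h g(\tau) f(\phi_1(x,0),\phi_3(x,-\tau))\, d\tau$; the linear terms are globally Lipschitz, while the nonlinear term is controlled using the continuous differentiability of $f$ on the interior of $\mathbb{R}_+^2$, which yields a uniform Lipschitz constant on every bounded subset of $C([-h,0],\mathbb{X})$, together with $\int_0^h g(\tau)\,d\tau=1$.

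For local existence and uniqueness, I would fix $\Phi\in C([-h,0],\mathbb{X})$, choose $r>0$, and define the closed ball of radius $r$ centred at the constant function $\Phi(0)$ in $C([-h,\sigma],\mathbb{X})$ for $\sigma>0$ to be chosen. On this ball I would show that the map
\begin{equation*}
(\mathcal{F}u)(t) = \begin{cases} \Phi(t), & t\in[-h,0],\\[1mm] T(t)\Phi(0) + \displaystyle\int_0^t T(t-s)F(u_s)\, ds, & t\in[0,\sigma],\end{cases}
\end{equation*}
is a strict contraction once $\sigma$ is small enough, using the boundedness of $\|T(t)\|$ on $[0,\sigma]$ together with the local Lipschitz estimate for $F$. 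The Banach fixed point theorem then produces a unique mild solution on $[0,\sigma]$.

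Finally, to reach the maximal interval $[0,T_{max})$, I would apply the standard continuation argument: starting from the local solution I iterate the fixed-point construction using the end value as a new initial datum; the length $\sigma$ of each step depends only on a bound for $\|u_t\|$, so the solution can be continued as long as $\|u(t)\|_{\mathbb{X}}$ stays bounded. This gives the blow-up alternative $T_{max}=+\infty$ or $\limsup_{t\uparrow T_{max}}\|u(t)\|_{\mathbb{X}}=+\infty$. The main obstacle, and the step that deserves care, is the verification of the local Lipschitz property of the nonlinear term involving the distributed delay on a bounded set of $C([-h,0],\mathbb{X})$, since one must uniformly control $f$ together with its partial derivatives on a compact subset of $\mathbb{R}_+^2$ determined by the chosen ball; everything else is routine semigroup machinery.
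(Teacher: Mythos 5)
The paper offers no proof of this proposition---it is simply quoted from \cite{hale,hale2}---and your argument is exactly the standard one underlying those references: the Neumann Laplacian with positive diffusion coefficients generates a $C_0$-semigroup on $\mathbb{X}$, $F$ is locally Lipschitz, the Banach fixed-point theorem applied to the mild formulation gives local existence and uniqueness, and the continuation argument yields the blow-up alternative, so your sketch is correct and takes essentially the same route. The one point that genuinely deserves care, and which you rightly single out, is the local Lipschitz continuity of $F$: the hypotheses only give $C^1$-regularity of $f$ in the interior of $\mathbb{R}^2_+$, so the Lipschitz estimate on bounded sets touching the boundary $\{S=0\}\cup\{I=0\}$ needs an extra word (the paper itself asserts this property without proof).
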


We now prove the boundedness of the solution.

\begin{proposition}
\label{prop:bound}
If 
\begin{equation}
\label{eq:sol:Modelo}
\left(S(\cdot, t), E(\cdot,t), I(\cdot, t), R(\cdot, t)\right)
\end{equation}
is the solution of \eqref{Model}, then \eqref{eq:sol:Modelo} 
is bounded.
\end{proposition}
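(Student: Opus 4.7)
The plan is to establish an $L^\infty$ bound on each component successively, using the parabolic comparison principle for Neumann problems together with hypothesis $(H_2)$. First, since $f\geq 0$, the $S$-equation satisfies $\partial_t S - k_S \Delta S \leq b - \mu S$ in $\Omega$ with $\partial_\nu S = 0$ on $\partial\Omega$. Comparison with the spatially homogeneous ODE $\dot{\bar S} = b - \mu \bar S$, $\bar S(0) = \|\Phi_1\|_\infty$, then yields
$$
S(x,t) \leq M_S := \max\bigl\{\|\Phi_1\|_\infty,\ b/\mu\bigr\}\qquad \text{for all } (x,t)\in\overline\Omega\times[0,T_{\max}).
$$

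Next, I would exploit $(H_2)$: since $\phi(S,I) = f(S,I)/I$ is decreasing in $I>0$ with limit $k(S)$ as $I\to 0^+$, and $k$ is increasing in $S$, we have $f(S(x,t),I)\leq k(M_S)\,I$ wherever $S\leq M_S$, which essentially linearizes the nonlinear incidence. Summing all four equations of \eqref{Model}, integrating over $\Omega$, and using the Neumann boundary condition to annihilate every Laplacian term, the sign-definite loss $-cI$ is discarded to obtain
$$
\frac{d}{dt}\int_\Omega (S+E+I+R)(x,t)\,dx \leq b|\Omega| - \mu\int_\Omega (S+E+I+R)(x,t)\,dx,
$$
so the total population stays uniformly bounded in $L^1(\Omega)$ by $\max\bigl\{\int_\Omega(S+E+I+R)(x,0)\,dx,\ b|\Omega|/\mu\bigr\}$.

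To conclude, I would upgrade these $L^1$ bounds to $L^\infty$ bounds componentwise. Combining the incidence estimate $f\leq k(M_S)I$ with the $L^1$ control on $I$ bounds the source of the $E$-equation; the semigroup representation of Proposition \ref{prop:ex:uniq} together with the regularizing properties of the Neumann heat semigroup $\exp(tA)$ then gives an $L^\infty$ bound on $E$. The same argument applied in turn to the $I$-equation (source $\alpha E$) and to the $R$-equation (source $\gamma I$) closes the chain and delivers the desired uniform bound in $\mathbb{X}=C(\overline\Omega,\mathbb{R}^{+4})$.

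The main obstacle is that the diffusion rates $k_S,k_E,k_I,k_R$ are in general distinct, so one cannot apply the maximum principle directly to the total $N=S+E+I+R$ (as is standard when all diffusivities coincide). The workaround is precisely the bootstrap above: bound $S$ first, because its source reduces to the benign constant $b$, then propagate the bound through the cascade $S\to E\to I\to R$ by means of $(H_2)$, the $L^1$ total-mass estimate, and parabolic smoothing.
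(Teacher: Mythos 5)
Your middle paragraph --- summing the four equations, integrating over $\Omega$, killing the Laplacians with Green's formula and the Neumann conditions, discarding $-cI$, and deducing $N'(t)\leq b|\Omega|-\mu N(t)$, hence $N(t)\leq\max\{N(0),\,b|\Omega|/\mu\}$ --- is \emph{exactly} the paper's entire proof. Everything else you propose is additional. The paper stops at the $L^1(\Omega)$ bound on the total population and calls the solution ``bounded''; you correctly sense that this is weaker than boundedness in the state space $\mathbb{X}=C(\overline\Omega,\mathbb{R}^{+4})$, which is what one actually needs to combine with Proposition~\ref{prop:ex:uniq} and conclude $T_{\max}=+\infty$ (the blow-up alternative is stated in the sup-norm, not in $L^1$). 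So your plan --- comparison principle for $S$, then a cascade $S\to E\to I\to R$ --- is aimed at the right target, and your observation that distinct diffusivities block a direct maximum-principle argument for $N=S+E+I+R$ is the correct diagnosis of why the problem is not trivial.

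There is, however, a genuine gap in your final bootstrap. Passing from the uniform $L^1$ bound on $I$ to an $L^\infty$ bound on $E$ via Duhamel and ``the regularizing properties of the Neumann heat semigroup'' requires the estimate $\|e^{t k_E\Delta}\|_{L^1\to L^\infty}\lesssim t^{-n/2}$, and the resulting integral $\int_0^t (t-s)^{-n/2}\,ds$ diverges for $n\geq 2$. A single smoothing step therefore only closes in space dimension one; for general $n$ you need a finite ladder $L^1\to L^{p_1}\to\cdots\to L^\infty$ with each exponent $\tfrac n2(\tfrac1p-\tfrac1q)<1$ (an Alikakos/Moser-type iteration), applied at each stage of the cascade. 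Two smaller points: the inequality $\partial_t S-k_S\Delta S\leq b-\mu S$ uses $f\geq 0$ evaluated along the solution, so nonnegativity of the components should be recorded first (the paper assumes it silently as well); and the hypothesis $(H_2)$ step $f(S,I)\leq k(M_S)I$, while correct, is not actually needed once you have the $L^1$ mass bound --- the source of the $E$-equation is already controlled in $L^1$ by $\beta\sup_{S,I}f$ only through the mass of $I$, so you can drop that detour. With the iteration spelled out, your argument proves a strictly stronger statement than the paper's.
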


\begin{proof}
Adding the three equations of system \eqref{Model}, we obtain that
\begin{multline*}
\dfrac{\partial S(x,t)}{\partial t}
+ \dfrac{\partial E(x,t)}{\partial t}
+\dfrac{\partial I(x,t)}{\partial t}
+\dfrac{\partial R(x,t)}{\partial t} 
- k_S \Delta S(x,t) - k_E \Delta E(x,t)\\
- k_I \Delta I(x,t) -k_R \Delta R(x,t)
=b- \mu S(x,t) -(\mu+c)I(x,t) )- \mu R(x,t).
\end{multline*}
Integrating both sides,
\begin{equation*}
\begin{split}
\displaystyle \int_{\Omega}^{} &
\Big \{ \dfrac{\partial S(x,t)}{\partial t}+ \dfrac{\partial E(x,t)}{\partial t} 
+\dfrac{\partial I(x,t)}{\partial t}+\dfrac{\partial R(x,t)}{\partial t} \Big \}dx\\
&\quad - \displaystyle \int_{\Omega}^{} \Big \{ k_S \Delta S(x,t)
+  k_E \Delta E(x,t) + k_I \Delta I(x,t) +k_R \Delta R(x,t)\Big \}dx\\
&= \displaystyle \int_{\Omega}^{} 
\Big\{b- \mu S(x,t) -(\mu+c)I(x,t) )- \mu R(x,t)\Big \}dx. 
\end{split}
\end{equation*}
By Green's formula, we obtain that
\begin{equation*}
\begin{split}
k_S \int_{\Omega}^{} \  \Delta S(x,t) dx 
&= k_S \displaystyle \int_{\partial\Omega}^{}  
\dfrac{\partial S(x,t)}{\partial \nu} dx,\\ 
k_E \int_{\Omega}^{} \  \Delta E(x,t) dx 
&= k_E \displaystyle \int_{\partial\Omega}^{}  
\dfrac{\partial E(x,t)}{\partial \nu} dx,\\
k_I \int_{\Omega}^{} \  \Delta I(x,t) dx 
&= k_I \displaystyle \int_{\partial\Omega}^{}  
\dfrac{\partial I(x,t)}{\partial \nu} dx,\\ 
k_R \int_{\Omega}^{} \  \Delta R(x,t) dx 
&= k_R \displaystyle \int_{\partial\Omega}^{}  
\dfrac{\partial R(x,t)}{\partial \nu} dx.
\end{split} 
\end{equation*}		
From the Neumann boundary conditions, we have that
$$
\dfrac{\partial S(x,t)}{\partial \nu}
=\dfrac{\partial E(x,t)}{\partial \nu}
=\dfrac{\partial I(x,t)}{\partial \nu}
=\dfrac{\partial R(x,t)}{\partial \nu}=0, 
\quad x\in \partial\Omega,\quad t>0. 
$$	
Hence,
\begin{equation*}
\begin{split}
\displaystyle \int_{\Omega}^{} & 
\left\{ \dfrac{\partial S(x,t)}{\partial t} 
+ \dfrac{\partial E(x,t)}{\partial t}
+\dfrac{\partial I(x,t)}{\partial t}+\dfrac{\partial R(x,t)}{\partial t}\right\}dx\\
&=\displaystyle \int_{\Omega}^{} \Big \{b
- \mu \Big(S(x,t)+E(x,t)+I(x,t)+R(x,t)\Big) -cI(x,t)\Big \}dx  \\[2ex]
&\leq \displaystyle \int_{\Omega}^{} \Big \{b- \mu \Big(S(x,t)
+E(x,t)+I(x,t)+R(x,t)\Big)  \Big\}dx \\[2ex]
&= b \arrowvert \Omega \arrowvert
-\displaystyle \int_{\Omega}^{} 
\Big\{ \mu \Big(S(x,t)+E(x,t)+I(x,t)+R(x,t)\Big)\Big\}dx. 
\end{split}
\end{equation*}		
Denote 	
$$
\int_{\Omega}^{} \Big \{  \Big(S(x,t)
+E(x,t)+I(x,t)+R(x,t)\Big)  \Big \}dx= N(t),
$$		
which gives
$$ 
\dfrac{dN(t)}{dt} \leq   b \arrowvert \Omega \arrowvert- \mu N(t).  
$$	
It follows that
$$ 
0 \leq N(t) \leq \dfrac{b \arrowvert \Omega \arrowvert}{\mu}
+ N(0) \exp(-\mu t).
$$	
Therefore, 
$$
N(t) \leq \max \Big\{\dfrac{b 
\arrowvert \Omega \arrowvert}{\mu}, N(0) \Big\},
$$
where
$$
\begin{array}{lll}
N(0)&=&\displaystyle \int_{\Omega}^{} 
\Big\{ \Big(S(x,0)+E(x,0)+I(x,0)+R(x,0)\Big)  \Big \}dx \\[2ex]
& \leq & \displaystyle \int_{\Omega}^{}  
\|S(x,0)+I(x,0)+R(x,0)\|_\infty  dx \\[2ex]
&=& \|S(x,0)+E(x,0)+I(x,0)+R(x,0)\|_\infty \arrowvert \Omega \arrowvert. 
\end{array}
$$	
This shows that 
$N(t)=\displaystyle \int_{\Omega}^{}\Big\{\Big(S(x,t)+E(x,t)+I(x,t)+R(x,t)\Big)\Big\}dx$ 
is bounded.
\end{proof}

\begin{remark}
The local existence and uniqueness (Proposition~\ref{prop:ex:uniq})
and the boundedness of the solution (Proposition~\ref{prop:bound})
of (\ref{Model}) implies the global existence and uniqueness
of the solution.
\end{remark}

Since the  first two equations in (\ref{Model}) 
do not contain $R(x,t)$, it is sufficient to analyze 
the behavior of solutions to the following system:
\begin{equation}
\label{MM}
\left\{
\begin{array}{lll}
\dfrac{\partial S(x,t)}{\partial t} - k_S \Delta S(x,t)
= b- \mu S(x,t)- \beta \displaystyle \int_0^{h}g(\tau)f(S(x,t),I(x,t-\tau))d\tau, 
\quad x\in\Omega,\\[2ex]
\dfrac{\partial E(x,t)}{\partial t} - k_E \Delta E(x,t)
= \beta \displaystyle \int_0^{h}g(\tau)f(S(x,t),I(x,t-\tau))d\tau 
-(\mu+\alpha) E(x,t), \quad x\in\Omega,\\[2ex]
\dfrac{\partial I(x,t)}{\partial t} - k_I \Delta I(x,t)
= \alpha E(x,t) -(\mu+c+\gamma)I(x,t), \quad x\in\Omega,\\[2ex]
\displaystyle \frac{\partial S(x,t)}{\partial \nu}
=\frac{\partial E(x,t)}{\partial \nu}
=\frac{\partial I(x,t)}{\partial \nu}=0,
\quad x\in\partial\Omega,
\end{array}
\right.
\end{equation}
$t>0$. In the sequel we use this fact.


\subsection{Equilibria and the basic reproduction number}
\label{sec:3.2}

System (\ref{MM}) always has a disease-free equilibrium 
$E_{0} = (S_{0},0, 0)$, where $ S_{0} =~\dfrac{b}{\mu}$. 
Furthermore, by a simple and direct calculation, we conclude 
that the basic reproduction number for the model is given by
$$ 
R_{0}=\dfrac{\beta \alpha 
\dfrac{\partial f(E_0)}{\partial I}}{(\mu+\alpha)(\mu+\gamma+c)}.
$$
We have the following result.

\begin{theorem}
If  $R_0 > 1$, then (\ref{MM}) admits a unique endemic equilibrium 
$E^* = (S^*, E^*, I^*)$.
\end{theorem}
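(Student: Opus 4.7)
The plan is to reduce the endemic equilibrium problem to a single scalar equation and then apply the intermediate value theorem plus a monotonicity argument. Any endemic equilibrium of \eqref{MM} is in particular a \emph{spatially homogeneous} steady state, so the Laplacian terms vanish and, because $\int_0^h g(\tau)d\tau = 1$, the distributed delay integral collapses to $f(S^*,I^*)$. This leaves the algebraic system
\begin{equation*}
b - \mu S^* - \beta f(S^*,I^*) = 0, \qquad \beta f(S^*,I^*) - (\mu+\alpha) E^* = 0, \qquad \alpha E^* - (\mu+c+\gamma) I^* = 0.
\end{equation*}

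First I would eliminate $E^*$ and $S^*$ in favour of $I^*$. The third equation gives $E^* = \frac{\mu+c+\gamma}{\alpha}\,I^*$; substituting into the second and combining with the first yields
\begin{equation*}
S^*(I) := \frac{b}{\mu} - \frac{(\mu+\alpha)(\mu+c+\gamma)}{\mu\alpha}\,I, \qquad \beta f(S^*(I),I) = \frac{(\mu+\alpha)(\mu+c+\gamma)}{\alpha}\,I.
\end{equation*}
Note $S^*(0) = S_0 = b/\mu$ and that $S^*(I)$ is strictly decreasing in $I$, vanishing at a unique $I_{\max} > 0$. Endemic equilibria thus correspond bijectively to positive roots $I^* \in (0,I_{\max})$ of
\begin{equation*}
G(I) := \beta\,\phi(S^*(I),I) - \frac{(\mu+\alpha)(\mu+c+\gamma)}{\alpha},
\end{equation*}
where $\phi(S,I) = f(S,I)/I$ is the quantity appearing in $(H_2)$.

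For existence I would examine the endpoints. Using $(H_2)$, as $I \to 0^+$ we have $\phi(S^*(I),I) \to k(S_0)$, and since $R_0 = \beta\alpha k(S_0)/[(\mu+\alpha)(\mu+c+\gamma)]$ (this matches the formula in the excerpt, with $\partial f/\partial I$ at $E_0$ equal to $k(S_0)$), the assumption $R_0 > 1$ is exactly $G(0^+) > 0$. At the other end, $S^*(I_{\max}) = 0$, so $f(S^*(I_{\max}),I_{\max}) = 0$ by the standing assumption $f(0,I)=0$, giving $G(I_{\max}) = -(\mu+\alpha)(\mu+c+\gamma)/\alpha < 0$. Continuity of $G$ on $(0,I_{\max})$ and the intermediate value theorem produce at least one root $I^* \in (0,I_{\max})$, and then $S^* = S^*(I^*) > 0$ and $E^* = (\mu+c+\gamma)I^*/\alpha > 0$ are well defined.

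For uniqueness I would show that $G$ is strictly decreasing on $(0,I_{\max})$. As $I$ increases, $S^*(I)$ strictly decreases, and by $(H_1)$ the map $S \mapsto f(S,I)$ is strictly increasing, so $S \mapsto \phi(S,I) = f(S,I)/I$ is strictly increasing in $S$ for each fixed $I > 0$; hence the composition $I \mapsto \phi(S^*(I),I)$ strictly decreases through its $S$-slot. Simultaneously $(H_2)$ asserts that $I \mapsto \phi(S,I)$ is monotone decreasing for each fixed $S$. Both effects pull $G$ downward, so $G$ is strictly monotone and the root $I^*$ is unique. The only subtle point—and the place I would be most careful—is arguing strict (not just weak) decrease of the composition, since $(H_2)$ only guarantees weak monotonicity of $\phi(S,\cdot)$; strictness is secured by $(H_1)$ applied to the strictly decreasing first argument $S^*(I)$.
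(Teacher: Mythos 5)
Your proposal is correct and follows essentially the same route as the paper: reduce to the homogeneous algebraic system, express $S^*$ and $E^*$ in terms of $I^*$, and apply the intermediate value theorem to the same scalar function (the paper's $H$ is your $G$), with strict monotonicity from $(H_1)$--$(H_2)$ giving uniqueness. Your extra remark that strictness of the decrease comes from $(H_1)$ acting through the strictly decreasing $S^*(I)$ (since $(H_2)$ only gives weak monotonicity in $I$) is a useful clarification that the paper leaves implicit.
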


\begin{proof}
We look for solutions $(S^*, E^*, I^*)$ of the equations  
$\dfrac{\partial S}{\partial t} = 0$  and 
$\dfrac{\partial E}{\partial t} = 0$. First note that  
$\dfrac{\partial S}{\partial t} + \dfrac{\partial E}{\partial t} = 0$  
implies
$$ 
b- \mu S^* -(\mu+\alpha)E^*  =0 
$$
and so 
$$ 
S^*=~\dfrac{b}{\mu}-\dfrac{(\mu+\alpha)(\mu+c+\gamma)I^*}{\mu\alpha}. 
$$
Let $H$ be a function defined for $\mathbb{R}^+ $ to $\mathbb{R}$ by
$$ 
H(I)=\beta\dfrac{f \Big(\dfrac{b}{\mu}
-\dfrac{(\mu+\alpha)(\mu+c+\gamma)I}{\mu\alpha},I \Big)}{I}
-\dfrac{(\mu+\alpha)(\mu+c+\gamma)}{\alpha}.
$$
By the hypotheses $(H_1)$ and $(H_2)$, 
$H$ is strictly monotone decreasing on $\mathbb{R}^+$ satisfying
\begin{equation*}
\begin{split}
\lim\limits_{I \rightarrow 0^+} H(I) 
&= \beta\dfrac{\partial f(E_{0})}{\partial I}
-\dfrac{(\mu+\alpha)(\mu+c+\gamma)}{\alpha}\\  
&= \dfrac{(\mu+\alpha)(\mu+c+\gamma)}{\alpha}(R_0-1)\\
&> 0
\end{split}
\end{equation*}
and 
$$ 
H \left(\dfrac{b\alpha}{(\mu+\alpha)(\mu+c+\gamma)} \right)
=-\dfrac{(\mu+\alpha)(\mu+c+\gamma)}{\alpha} <0, 
$$
which implies that there exists a unique positive solution $I=I^*$ 
such that
$$ 
0 < I^* <\dfrac{b\alpha}{(\mu+\alpha)(\mu+c+\gamma)}.
$$
The proof is complete.
\end{proof}

In what follows we study the stability of $E_0$ and $E^*$.


\subsection{Global stability of the disease free equilibrium}
\label{sec:3.3}

In this section, we show the global asymptotic stability 
of the disease-free equilibrium $E_0$  of the system (\ref{MM}) 
by  constructing a Lyapunov functional. The following result holds.

\begin{theorem} 
\label{theorem1}
Under hypotheses $(H_1)$ and $(H_2)$ 
the disease free equilibrium $E_{0} $ of system \eqref{MM} 
is globally asymptotically stable if, and only if, $R_{0} \leq 1$.
\end{theorem}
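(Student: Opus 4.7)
The plan is to prove the non-trivial implication $R_{0}\leq 1 \Rightarrow E_0$ is globally asymptotically stable by constructing a Lyapunov functional on $\Omega$ and applying LaSalle's invariance principle; the converse direction will be handled either by linearisation at $E_0$ or, more directly, by invoking the existence of the endemic equilibrium already established in the previous theorem.

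For the main direction, I would try the functional
\[
L(t)= \int_{\Omega}\Bigl[F(S(x,t)) + E(x,t) + \frac{\mu+\alpha}{\alpha}I(x,t) + U(x,t)\Bigr]dx,
\]
where $F(S)=S-S_{0}-\int_{S_{0}}^{S}\frac{k(S_{0})}{k(\eta)}\,d\eta$ is the standard Volterra-type antiderivative adapted to the nonlinear incidence (so that $F'(S_0)=0$), and $U$ is a distributed-delay compensator of the form
\[
U(x,t)=\beta\int_{0}^{h}g(\tau)\int_{t-\tau}^{t}f(S_{0},I(x,\xi))\,d\xi\,d\tau.
\]
By $(H_2)$, $k$ is monotone increasing, so $F''(S)=k(S_{0})k'(S)/k(S)^{2}\geq 0$ and $F\geq 0$ attains its unique minimum at $S_{0}$; hence $L\geq 0$ with equality at $E_0$.

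Next I would differentiate $L$ along solutions of (\ref{MM}). Green's identity combined with the Neumann boundary condition yields $\int_{\Omega}F'(S)k_{S}\Delta S\,dx = -k_{S}\int_{\Omega}F''(S)|\nabla S|^{2}\,dx\leq 0$, while the $E$- and $I$-diffusion terms integrate to zero. Substituting $b=\mu S_{0}$ and $R_{0}=\beta\alpha k(S_{0})/[(\mu+\alpha)(\mu+c+\gamma)]$ and using $\dot U=\beta\int_{0}^{h}g(\tau)[f(S_{0},I(\cdot,t))-f(S_{0},I(\cdot,t-\tau))]\,d\tau$, the remaining contributions should reorganise schematically into
\[
\dot L(t)\leq \mu\int_{\Omega}\Bigl(1-\frac{k(S_{0})}{k(S)}\Bigr)(S_{0}-S)\,dx + \frac{(\mu+\alpha)(\mu+c+\gamma)}{\alpha}(R_{0}-1)\int_{\Omega}I\,dx + \text{(sign-definite residue)}.
\]
The first integrand is $\leq 0$ because $k$ is increasing (the two factors have opposite signs); the residue can be shown non-positive using $(H_{1})$ (so that $f(S,I(t-\tau))$ can be compared with $f(S_{0},I(t-\tau))$) and $(H_{2})$ (so that $f(S_{0},I)\leq k(S_{0})I$, bounding the $I$-dependent incidence by its linearisation at $I=0$); the middle term is $\leq 0$ exactly when $R_{0}\leq 1$. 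Then LaSalle's invariance principle forces the $\omega$-limit set into $\{\dot L=0\}$, which, after chasing the equality conditions back through the PDE system, collapses to $\{E_{0}\}$, yielding global asymptotic stability.

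For the ``only if'' direction, if $R_{0}>1$ then by the previous theorem there exists a spatially homogeneous endemic equilibrium $E^{*}\neq E_{0}$, which already rules out $E_{0}$ being globally attracting; a linearisation at $E_{0}$, decomposing in the Neumann Laplacian eigenbasis and analysing the resulting characteristic equation of the delayed ODE for $(E,I)$, shows more strongly that $E_{0}$ is unstable precisely when $R_{0}>1$. I expect the main obstacle to lie in tuning the delay functional $U$ (for instance, deciding whether it should be weighted by $k(S_0)/k(S)$ or evaluated at $f(S_0,\cdot)$) so that the cross-terms involving $\beta\int_{0}^{h}g(\tau)f(S,I(t-\tau))\,d\tau$ telescope correctly against $\dot U$ and leave only expressions of definite sign; once the correct functional is found, the diffusion handling via Green's formula, the sign analysis, and the final LaSalle argument are all standard.
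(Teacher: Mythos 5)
Your overall strategy coincides with the paper's: the non-delay part of your functional, $\int_{S_0}^{S}\bigl(1-\tfrac{k(S_0)}{k(\eta)}\bigr)d\eta + E + \tfrac{\mu+\alpha}{\alpha}I$, is exactly the paper's $V_1$, and the rest of the argument (Green's formula with Neumann conditions to kill or sign the diffusion terms, the bound via $R_0-1$, LaSalle) is the same. The one substantive divergence is the delay compensator, and it is precisely where your proposal, as written, does not close. With
\[
U(x,t)=\beta\int_{0}^{h}g(\tau)\int_{t-\tau}^{t}f(S_{0},I(x,\xi))\,d\xi\,d\tau ,
\]
the delayed incidence contribution after cancellation is $\tfrac{k(S_0)}{k(S)}\beta\int_0^h g(\tau)f(S,I(t-\tau))\,d\tau$, and for this to telescope against the $-\beta\int_0^h g(\tau)f(S_0,I(t-\tau))\,d\tau$ coming from $\dot U$ you would need
\[
\frac{k(S_0)}{k(S)}\,f(S,I)\;\leq\; f(S_0,I),
\qquad\text{i.e.}\qquad \frac{\phi(S,I)}{k(S)}\leq\frac{\phi(S_0,I)}{k(S_0)},
\]
which does \emph{not} follow from $(H_1)$--$(H_2)$: those hypotheses only give $\phi(S,I)\leq k(S)$ for each fixed $S$, not monotonicity of the ratio $\phi(S,I)/k(S)$ in $S$. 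The paper sidesteps this by taking the compensator \emph{linear} in $I$, namely $V_2=\tfrac{\mu+\alpha}{\alpha}(\mu+c+\gamma)\int_0^h g(\tau)\int_{t-\tau}^t I(u)\,du\,d\tau$; then $(H_2)$ alone yields $\tfrac{k(S_0)}{k(S)}\phi(S,I(t-\tau))\leq k(S_0)$, so the delayed incidence is dominated by $\beta k(S_0)I(t-\tau)$, which is exactly what the linear compensator cancels, leaving the factor $(R_0-1)\leq 0$. You explicitly flagged the tuning of $U$ as the likely obstacle, and this is indeed the fix: weight the compensator by $I$ itself (with the constant $\tfrac{\mu+\alpha}{\alpha}(\mu+c+\gamma)$), not by $f(S_0,\cdot)$.

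One further remark in your favour: the paper's proof only establishes the implication $R_0\leq 1\Rightarrow$ global asymptotic stability and never argues the ``only if'' half of the stated equivalence; your plan for that direction (existence of the endemic equilibrium when $R_0>1$ ruling out global attractivity of $E_0$, reinforced by linearisation in the Neumann eigenbasis) is a reasonable way to supply what the paper leaves implicit.
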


\begin{proof} 
\label{fun1}
To prove our result, we consider the following Lyapunov functional: 	
$$ 
V(t)= \int_{\Omega}^{} \Big \{V_1(t)+V_2(t)\Big\}dx, 
$$
where
$$ 
V_1(t)= \int_{S_0}^{S(x,t)} 
\Big(1-\dfrac{k( S_0)}{k( \sigma)} \Big)d \sigma
+ E(x,t)+\dfrac{\mu + \alpha}{\alpha}I(x,t)  
$$ 
and 
$$ 
V_2(t)=\dfrac{\mu + \alpha}{\alpha}(\mu+c+\gamma) 
\int_0^h g( \tau) \int_{t-\tau}^{t} I(u)dud\tau. 
$$
Then,
\begin{equation*}
\begin{split}
\dfrac{dV(t)}{dt}
&=\displaystyle \int_{\Omega}^{} \Big\{\Big(1-\dfrac{k( S_0)}{k(S(x,t))}\Big)\\
&\qquad \times \Big(k_S \Delta S(x,t)+ b- \mu S(x,t)- \beta \displaystyle 
\int_{0}^{h}g(\tau)f(S(x,t),I(x,t-\tau))d\tau \Big)\\
&\quad + k_E \Delta E(x,t)+ \beta \displaystyle 
\int_0^{h}g(\tau)f(S(x,t),I(x,t-\tau))d\tau -(\mu+\alpha) E(x,t)\\
&\quad+ \dfrac{\mu + \alpha}{\alpha}\Big\{ k_I \Delta I(x,t)
+ \beta \displaystyle \int_0^h g(\tau)f(S(x,t),I(x,t-\tau))d\tau -(\mu+c+\gamma)I(x,t)\Big\}\\
&\quad+ \dfrac{\mu + \alpha}{\alpha}(\mu+c+\gamma) 
\displaystyle \int_{0}^{h} g( \tau) ( I(x,t) - I(x,t-\tau))  d\tau \Big \}dx \\
&= \displaystyle \int_{\Omega}^{} \Big\{ -\mu \Big(1-\dfrac{k( S_0)}{k(S(x,t))} \Big) 
\Big(S(x,t)- S_0 \Big) \\
&\quad-\Big(1-\dfrac{k( S_0)}{k(S(x,t))} \Big) \Big(\beta \displaystyle 
\int _{0}^{h}g(\tau)f(S(x,t),I(x,t-\tau))d\tau \Big)\\ 
&\quad+ \beta \displaystyle \int _{0}^{h}g(\tau)f(S(x,t),I(x,t-\tau))d\tau  
- \dfrac{\mu + \alpha}{\alpha}(\mu+c+\gamma)I(x,t) \\
&\quad+ \dfrac{\mu + \alpha}{\alpha}(\mu+c+\gamma) \displaystyle 
\int_{0}^{h} g( \tau) ( I(x,t) - I(x,t-\tau))  d\tau \Big \}dx \\
&\quad+ \displaystyle \int_{\Omega}^{} k_S\Delta S(x,t)dx-k_S \displaystyle 
\int_{\Omega}^{}\dfrac{k( S_0)}{k(S(x,t))} \Delta S(x,t)dx \\
&\quad + \displaystyle \int_{\Omega}^{} k_E \Delta E(x,t)dx
+ \displaystyle \int_{\Omega}^{} k_I \Delta I(x,t)dx
\end{split}
\end{equation*}
from which we conclude that
\begin{equation*}
\begin{split}
\dfrac{dV(t)}{dt}
&= \displaystyle \int_{\Omega}^{} \Big\{ -\mu 
\Big(1-\dfrac{k( S_0)}{k(S(x,t))} \Big) \Big(S(x,t)- S_0 \Big)\\
&\qquad+ \displaystyle \int_{0}^{h} g(\tau) \Big( 
\beta \alpha\dfrac{\phi(S(x,t),I(x,t - \tau))}{(\mu+\alpha)(\mu+c+\gamma)}
\dfrac{k( S_0)}{k(S(x,t))} -1  \Big)\\
&\qquad\quad \times \dfrac{\mu + \alpha}{\alpha}(\mu+c+\gamma)
I(x,t - \tau)d\tau\Big \}dx \\
&\quad+  \displaystyle \int_{\Omega}^{} k_S\Delta S(x,t)dx
-k_S \displaystyle \int_{\Omega}^{}\dfrac{k( S_0)}{k(S(x,t))} 
\Delta S(x,t)dx \\
&\quad + \displaystyle \int_{\Omega}^{} k_E \Delta E(x,t)dx
+ \displaystyle \int_{\Omega}^{} k_I \Delta I(x,t)dx.
\end{split}
\end{equation*}
By Green's formula and from the Neumann boundary conditions, we obtain that		
\begin{multline*}
\displaystyle  \int_{\Omega}^{} \left(k_S \Delta S(x,t) + k_E  \Delta E(x,t) 
+ k_I  \Delta I(x,t)\right) dx \\
=  \displaystyle \int_{\partial \Omega}^{} 
\left(k_S \dfrac{\partial S(x,t)}{\partial \nu} 
+ k_E \dfrac{\partial E(x,t)}{\partial \nu} 
+ k_I \dfrac{\partial I(x,t)}{\partial \nu}\right) dx=0
\end{multline*}
and
$$
k_{S} \displaystyle \int_{\Omega}^{}\dfrac{k(S_0)}{k(S(x,t))} \Delta S(x,t)dx
= k_{S} \dfrac{k(S_0)}{k(S(x,t))}\displaystyle \int_{\Omega}^{}
\dfrac{\partial k(S)}{\partial S} \dfrac{|\Delta S(x,t)|^2}{(k(S))^2}dx.
$$
It follows that
\begin{equation*}
\begin{split}
\dfrac{dV(t)}{dt}
&= \displaystyle \int_{\Omega}^{} \Big\{ -\mu 
\Big(1-\dfrac{k( S_0)}{k(S(x,t))} \Big) \Big(S(x,t)- S_0 \Big)\\
&+ \displaystyle \int_{0}^{h} g(\tau) \Big( \beta \alpha
\dfrac{\phi(S(x,t),I(x,t - \tau))}{(\mu+\alpha)(\mu+c+\gamma)}
\dfrac{k( S_0)}{k(S(x,t))} -1  \Big)\dfrac{\mu 
+ \alpha}{\alpha}(\mu+c+\gamma)I(x,t - \tau)d\tau\Big \}dx \\
&- k_{S} \dfrac{k( S_0)}{k(S(x,t))}\displaystyle 
\int_{\Omega}^{}\dfrac{\partial k(S)}{\partial S} 
\dfrac{|\Delta S(x,t)|^2}{(k(S))^2} dx. 
\end{split}
\end{equation*}
Since $k(S)$ is a monotone increasing function with respect to $S$, 
one has $\dfrac{\partial k(S)}{\partial S}\geq 0$. From hypothesis $(H_1)$, 
we get
$$
-\mu \Big(1-\dfrac{k( S_0)}{k(S(x,t))} \Big) \Big(S(x,t)- S_0 \Big) 
\leq 0
$$
and, from hypothesis $(H_2)$,
$$ 
\beta \alpha\dfrac{\phi(S(x,t),I(x,t - \tau))}{(\mu+\alpha)(\mu+c+\gamma)}
\dfrac{k( S_0)}{k(S(x,t))} \leq \beta \alpha\dfrac{k(S(x,t))}{(\mu
+\alpha)(\mu+c+\gamma)}\dfrac{k( S_0)}{k(S(x,t))} = R_0.
$$
If $R_0 \leq 1$, then
$$
\begin{array}{lll}
\dfrac{dV(t)}{dt} & \leq &\displaystyle \int_{\Omega}^{} \Big\{ 
-\mu \Big(1-\dfrac{k( S_0)}{k(S(x,t))} \Big) \Big(S(x,t)- S_0 \Big)
+ ( R_0 -1 )(\mu+c+\gamma)I(x,t - \tau ) \Big \}dx\\[2ex]
&-& k_{S} \dfrac{k( S_0)}{k(S(x,t))}\displaystyle \int_{\Omega}^{}
\dfrac{\partial k(S)}{\partial S} \dfrac{|\Delta S(x,t)|^2}{(k(S))^2}dx.
\end{array}
$$
Clearly, $\dfrac{dV(t)}{dt}\leq 0$ for all $t>0$ and $S, I, R>0$ and 
$\dfrac{dV(t)}{dt}= 0$ if, and only if, $(S, E, I)=(S_0, 0, 0)$, 
the largest compact invariant set in $\left \{(S, E, I): \dfrac{dV(t)}{dt}= 0\right \}$ 
being $E_0 $. By applying LaSalle's invariance principle \cite[Theorem~4.3.4]{Daniel1981},  
we conclude that the disease-free equilibrium point $E_0$ of system \eqref{MM} 
is globally asymptotically stable when $R_0\leq 1$, which completes the proof.
\end{proof} 
	

\subsection{Global stability of the endemic equilibrium}
\label{sec:3.4}

Now, we show the global asymptotic stability  
of the endemic equilibrium $E^*$ of system (\ref{MM}). 
As in the proof of Theorem~\ref{theorem1}, we construct 
a suitable Lyapunov functional and make our conclusion 
with the help of LaSalle's invariance principle.
	
\begin{theorem} 
\label{theorem2}
Assume that hypotheses $(H_1)$ and $(H_2)$ hold. 
If $R_{0} > 1$, then the endemic equilibrium of system \eqref{MM} 
is the only equilibrium and is globally asymptotically stable.
\end{theorem}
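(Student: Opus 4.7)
The plan is to follow the template set by Theorem~\ref{theorem1}: build a Lyapunov functional centered at $E^*$, differentiate it along trajectories of (\ref{MM}), show the derivative is non-positive, and conclude with LaSalle's invariance principle \cite[Theorem~4.3.4]{Daniel1981}. The novelty compared with the disease-free case is that the Volterra-type integrand built from the logarithm must absorb both the nonlinearity $f(S,I)$ and the distributed delay.

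The Lyapunov functional I would use is of the form
$$W(t)=\int_{\Omega}\bigl\{W_1(x,t)+W_2(x,t)\bigr\}\,dx,$$
with
$$W_1=\int_{S^*}^{S(x,t)}\!\left(1-\dfrac{f(S^*,I^*)}{f(\sigma,I^*)}\right)d\sigma+\Psi\!\bigl(E(x,t),E^*\bigr)+\dfrac{\mu+\alpha}{\alpha}\Psi\!\bigl(I(x,t),I^*\bigr),$$
where $\Psi(u,u^*)=u-u^*-u^*\ln(u/u^*)\geq 0$ is the usual Volterra function, together with a delay-compensating term
$$W_2=\beta f(S^*,I^*)\int_0^h g(\tau)\int_{t-\tau}^{t} G\!\left(\dfrac{f(S(x,s),I(x,s-\tau))}{f(S^*,I^*)}\right)ds\,d\tau,$$
with $G(y)=y-1-\ln y\geq 0$. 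The purpose of $W_2$ is the standard one: to absorb the $I(x,t-\tau)$ terms that appear when $\Psi(E,E^*)$ is differentiated through the incidence integral, converting them into boundary differences at $s=t$ and $s=t-\tau$.

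First I would differentiate $W$ along (\ref{MM}) and use the three endemic identities
$$b=\mu S^*+\beta f(S^*,I^*),\quad \beta f(S^*,I^*)=(\mu+\alpha)E^*,\quad \alpha E^*=(\mu+c+\gamma)I^*,$$
to eliminate the constant $b$ and to re-express the parameter groups $(\mu+\alpha)$ and $(\mu+c+\gamma)$. The diffusion contributions are handled exactly as in Theorem~\ref{theorem1}: Green's formula together with the Neumann boundary conditions reduces each term of the form $\int_\Omega (u^*/u)\Delta u\,dx$ to $-u^*\int_\Omega |\nabla u|^2/u^2\,dx\leq 0$, and the factor $1-f(S^*,I^*)/f(S,I^*)$ multiplying $\Delta S$ produces a non-positive quadratic-gradient term thanks to the monotonicity of $f(\cdot,I^*)$ granted by $(H_1)$.

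The main obstacle will be the algebraic rearrangement of the reaction part into a manifestly non-positive expression. After substituting the equilibrium identities, the remainder should split into a term
$$-\mu\left(1-\dfrac{f(S^*,I^*)}{f(S,I^*)}\right)(S-S^*),$$
which is non-positive by $(H_1)$, plus a sum of $-G(\cdot)$ terms involving the ratios $S^*/S$, $\bigl(f(S,I^*)/f(S^*,I^*)\bigr)(E^*/E)$ and $EI^*/(E^*I)$, and plus a mixed term of the form
$$\left(1-\dfrac{f(S,I)}{f(S,I^*)}\cdot\dfrac{I^*}{I}\right)\left(\dfrac{f(S,I)}{f(S^*,I^*)}-\dfrac{I}{I^*}\right),$$
whose sign has to be shown to be non-positive using that $\phi(S,I)=f(S,I)/I$ is monotone decreasing in $I$ (hypothesis $(H_2)$). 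Once $dW/dt\leq 0$ is established with equality only at $(S,E,I)=(S^*,E^*,I^*)$, the largest invariant subset of $\{dW/dt=0\}$ is the singleton $\{E^*\}$, and LaSalle's invariance principle delivers the claimed global asymptotic stability.
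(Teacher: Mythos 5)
Your overall route is the same as the paper's: the same Volterra-type functional in $S$, $E$ and $I$ (your $\int_{S^*}^{S}\bigl(1-f(S^*,I^*)/f(\sigma,I^*)\bigr)\,d\sigma$ is literally the paper's $S-S^*-\int_{S^*}^{S}f(S^*,I^*)/f(\sigma,I^*)\,d\sigma$, and your $\Psi$-terms are the paper's logarithmic terms for $E$ and $I$ with the same weight $(\mu+\alpha)/\alpha$), the same three endemic identities, the same Green's-formula treatment of the diffusion terms, the same splitting of the reaction part into a $\mu$-term controlled by $(H_1)$, a sum of $G$-terms, and a cross term controlled by $(H_2)$, and the same LaSalle conclusion. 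The one substantive difference is your delay-compensating term $W_2$: the paper's functional contains no such double integral. Including one is in fact the standard McCluskey-type device for making the $G$-grouping close in the presence of the distributed delay, so conceptually you are on firmer ground than the paper on this point. However, the specific integrand you chose does not telescope: differentiating $\int_{t-\tau}^{t}G\bigl(f(S(x,s),I(x,s-\tau))/f(S^*,I^*)\bigr)\,ds$ in $t$ produces the boundary term $G\bigl(f(S(x,t-\tau),I(x,t-2\tau))/f(S^*,I^*)\bigr)$, which involves the state at times $t-\tau$ and $t-2\tau$ that occur nowhere in system \eqref{MM} and therefore cannot cancel against anything coming from $dW_1/dt$. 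You need an integrand in which only one time argument is retarded, for instance one built from $G\bigl(I(x,u)/I^*\bigr)$ integrated over $u\in[t-\tau,t]$, so that the derivative yields exactly the difference $G(I(x,t)/I^*)-G(I(x,t-\tau)/I^*)$ that pairs with the delayed incidence. A second, smaller issue: the cross term whose sign you invoke $(H_2)$ for should compare $f(S,I(x,t-\tau))/f(S,I^*)$ with $I(x,t-\tau)/I^*$, i.e.\ both factors referenced to $f(S,I^*)$ and to the delayed $I$; with $f(S^*,I^*)$ in the second factor as you wrote it, monotonicity of $\phi(S,\cdot)$ alone does not determine the sign. With those two repairs your argument goes through and is, if anything, more complete than the paper's.
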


\begin{proof}
Let $G$ be the function defined from $\mathbb{R}^+ $ to $\mathbb{R}$  by
$$ 
G(x)=x-1-\ln(x).
$$ 
We have $G(x)\geq 0 $ if $x > 0$ and $G(x)=0 $ if $x=1$.
Let us consider the following Lyapunov functional: 
$$ 
W(t)= \int_{\Omega}^{} \Big (W_1(t)+W_2(t)\Big) dx,
$$
where
$$
W_1(t) = S(x,t)-S^* - \int_{S^*} ^{S(x,t)} \dfrac{f(S^*,I^*)}{f( \sigma ,I^*)}d \sigma 
+\dfrac{\mu+\alpha}{\alpha} \left(I(x,t)-I^* -I^* \ln\left(\dfrac{I(x,t)}{I^*}\right)\right)
$$
and
$$ 
W_2(t) =  \left(E(x,t)-E^* -E^* \ln\left(\dfrac{E(x,t)}{E^*}\right)\right).  
$$
Then,
\begin{equation*}
\begin{split}
\dfrac{dW(t)}{dt}
&= \left(1-\dfrac{f(S^*,I^*)}{f(S(x,t),I^*)} \right) 
\Big(( k_{S} \Delta S(x,t)+b- \mu S(x,t)\\
&- \beta \displaystyle \int_0^{h} g(\tau)f(S(x,t),I(x,t- \tau))d\tau\Big)\\
&+\dfrac{\mu+\alpha}{\alpha} \left(1-\dfrac{I^*}{I(x,t)}\right) 
\Big( k_{I} \Delta I(x,t)+\alpha E(x,t)-(\mu+c+ \gamma)I(x,t)\Big)\\
&+ \left(1-\dfrac{E^*}{E(x,t)}\right) \Big( k_{E} \Delta E(x,t)
+\beta \displaystyle \int_0^{h}g(\tau)f(S(x,t),I(x,t-\tau))d\tau 
-(\mu+\alpha)E(x,t) \Big)
\end{split}
\end{equation*}
and
$$
\left\{
\begin{array}{ll}
b= \mu S^* + \beta f(S^*,I^*),\\[2ex]
\beta f(S^*,I^*)=(\mu+\alpha) E^* ,\\[2ex]
\alpha E^*= ( \mu+c+ \gamma)I^*. 
\end{array}
\right.
$$
It follows that
\begin{equation*}
\begin{split}
\dfrac{dW(t)}{dt}
&= \displaystyle \int_{\Omega}^{}  \mu \left(1
-\dfrac{f(S^*,I^*)}{f(S(x,t),I^*)}\right) \Big(S(x,t)-S^* \Big) dx
+ \beta f(S^*,I^*)\\
&\times \displaystyle \int_{\Omega}^{}  
\displaystyle \int_0^{h}g(\tau) \left(3-\dfrac{f(S^*,I^*)}{f(S(x,t),I^*)} 
- \dfrac{I(x,t)}{I^*} -\dfrac{EI^*}{I E^*} 
-\dfrac{f(S(x,t),I(x,t- \tau))E^*}{f(S^*,I^*)E}\right) dx\\[2ex]
&+ \displaystyle \int_{\Omega}^{}\left( k_{S} \Delta S(x,t)-\dfrac{f(S^*,I^*)}{f(S(x,t),I^*)} 
k_{S} \Delta S(x,t) + k_E \Delta E(x,t)-\dfrac{E^*}{E(x,t)} k_E \Delta E(x,t)\right)dx\\
&+\displaystyle \int_{\Omega}^{} \left( k_{I} \Delta I(x,t)
-\dfrac{I^*}{I(x,t)} k_{I} \Delta I(x,t)\right) dx.
\end{split}
\end{equation*}
Therefore,
\begin{equation*}
\begin{split}
\dfrac{dW(t)}{dt}
&= \displaystyle \int_{\Omega}^{} \mu 
\left(1-\dfrac{f(S^*,I^*)}{f(S(x,t),I^*)}\right) \Big(S(x,t)-S^* \Big) dx
+ \beta f(S^*,I^*)\\
&\times \left[\displaystyle \int_{\Omega}^{} \displaystyle 
\int_0^{h}g(\tau) \left(-1-\dfrac{f(S(x,t),I(t-\tau)}{f(S(x,t),I^*)} 
- \dfrac{I(x,t)}{I^*}  +\dfrac{I(x,t)}{I^*}
\dfrac{f(S(x,t),I^*)}{f(S(x,t),I(x,t- \tau))}\right) dx\right.\\
&+ \displaystyle \int_{\Omega}^{} \displaystyle 
\int_0^{h}g(\tau) \left(4-\dfrac{f(S^*,I^*)}{f(S(x,t),I^*)} 
-\dfrac{E(x,t)I^*}{I(x,t) E^*} -\dfrac{I(x,t)}{I^*}
\dfrac{f(S(x,t),I^*)}{f(S(x,t),I(x,t- \tau))}\right)dx\\
&\left. -\displaystyle \int_{\Omega}^{} 
\dfrac{f(S(x,t),I(x,t- \tau))E^*}{f(S^*,I^*)E} dx\right]\\
&+ \displaystyle \int_{\Omega}^{}  \left( k_{S} \Delta S(x,t)
-\dfrac{f(S^*,I^*)}{f(S(x,t),I^*)} k_{S} \Delta S(x,t) 
+ k_E \Delta E(x,t)-\dfrac{E^*}{E(x,t)} k_E \Delta E(x,t)\right)dx\\
&+\displaystyle \int_{\Omega}^{} \left( k_{I} \Delta I(x,t)
-\dfrac{I^*}{I(x,t)} k_{I} \Delta I(x,t)\right) dx
\end{split}
\end{equation*}
and
\begin{equation*}
\begin{split}	
-&1-\dfrac{f(S(x,t),I(t-\tau)}{f(S(x,t),I^*)} - \dfrac{I(x,t)}{I^*}
+\dfrac{I(x,t)}{I^*}\dfrac{f(S(x,t),I^*)}{f(S(x,t),I(x,t-\tau))}\\
&=\left(\dfrac{I^*}{I(x,t)}-\dfrac{f(S(x,t),I(t-\tau))}{f(S(x,t),I^*)}\right)
\left(\dfrac{f(S(x,t),I^*)}{f(S(x,t),I(x,t-\tau))}-1 \right)\\
&\leq 0.
\end{split}
\end{equation*}
In view of
\begin{multline*}
\ln\left(\dfrac{f(S^*,I^*)}{f(S(x,t),I^*)}\right)
+ \ln\left(\dfrac{E(x,t)I^*}{I(x,t) E^*}\right) 
+ \ln\left(\dfrac{I(x,t)}{I^*}\dfrac{f(S(x,t),I^*)}{f(S(x,t),I(x,t- \tau))}\right)\\
+\ln\left(\dfrac{f(S(x,t),I(x,t- \tau))E^*}{f(S^*,I^*)E}\right)=0
\end{multline*}
we obtain that
\begin{equation*}
\begin{split}	
4&-\dfrac{f(S^*,I^*)}{f(S(x,t),I^*)} -\dfrac{E(x,t)I^*}{I(x,t) E^*} 
-\dfrac{I(x,t)}{I^*}\dfrac{f(S(x,t),I^*)}{f(S(x,t),I(x,t- \tau))}\\
&= G\left(\dfrac{f(S^*,I^*)}{f(S(x,t),I^*)}\right) 
+ G\left(\dfrac{E(x,t)I^*}{I(x,t) E^*}\right) + G\left(\dfrac{I(x,t)}{I^*}
\dfrac{f(S(x,t),I^*)}{f(S(x,t),I(x,t- \tau))}\right)\\
&\quad +G\left(\dfrac{f(S(x,t),I(x,t- \tau))E^*}{f(S^*,I^*)E}\right)\\
& \leq 0.
\end{split}
\end{equation*}
By Green's formula and from the Neumann boundary conditions, it follows that
\begin{multline*}
\int_{\Omega}^{} \left(k_S \Delta S(x,t) 
+ k_E  \Delta E(x,t)+ k_I  \Delta I(x,t)\right) dx \\
=  \displaystyle \int_{\partial\Omega}^{} 
\left(k_S \dfrac{\partial S(x,t)}{\partial \nu}
+k_E \dfrac{\partial E(x,t)}{\partial \nu} 
+ k_I \dfrac{\partial I(x,t)}{\partial \nu}\right) dx=0,
\end{multline*}
$$
\int_{\Omega}^{}\dfrac{f(S^*,I^*)}{f(S(x,t),I^*)}k_{S} \Delta S(x,t) dx
= k_{S} f(S^*,I^*)\displaystyle 
\int_{\Omega}^{}\dfrac{\partial f(S,I^*)}{\partial S}\dfrac{|\nabla S|^2}{I^2}dx,
$$
$$
\int_{\Omega}^{}\dfrac{I^*}{I(x,t)} k_{I} \Delta I(x,t) dx
= I^* k_{I} \displaystyle \int_{\Omega}^{}\dfrac{|\nabla S|^2}{I^2}dx
$$ 
and 
$$
\int_{\Omega}^{}\dfrac{E^*}{E(x,t)} k_E \Delta E(x,t) dx
= E^* k_E \displaystyle \int_{\Omega}^{}\dfrac{|\nabla E|^2}{E^2}dx.
$$
From hypothesis $(H_1)$, we have
$$  
\dfrac{\partial f(S,I^*)}{\partial S}>0.
$$
Hence, for any $t>0$, $R_0>1$ ensures 
$$
\dfrac{dW(t)}{dt} \leq 0 \ \text{ for all } S, E, I\geq 0
$$ 
and 
$$
\dfrac{dW(t)}{dt}= 0\ 
\text{ if and only if } S=S^*, E=E^* \text{ and } I=I^*.
$$ 
Clearly, the largest compact invariant set in 
$$
\left \{(S, I, R): \dfrac{dW(t)}{dt}= 0\right \}
$$ 
is the singleton $\{E^*\}$. By applying LaSalle's invariance 
principle \cite[Theorem 4.3.4]{Daniel1981}, we conclude that 
the endemic equilibrium point of system \eqref{MM} 
is globally asymptotically stable. The proof is complete.
\end{proof}


\section{Numerical simulations}
\label{sec:4}

In this section, we do numerical simulations 
in order to illustrate our analytical results. Let 
\begin{equation}
\label{example}
\left\{
\begin{array}{lll}
\dfrac{\partial S(x,t)}{\partial t} - k_S \Delta S(x,t)
= b- \mu S(x,t)- \beta \displaystyle \int_0^{h}g(\tau)
S(x,t)I(x,t-\tau)d\tau, \quad x\in\Omega,\\[2ex]
\dfrac{\partial E(x,t)}{\partial t} - k_E \Delta E(x,t)
= \beta \displaystyle \int_0^{h}g(\tau)S(x,t)I(x,t-\tau)d\tau 
-(\mu+\alpha) E(x,t), \quad x\in\Omega,\\[2ex]
\dfrac{\partial I(x,t)}{\partial t} - k_I \Delta I(x,t)
= \alpha E(x,t) -(\mu+c+\gamma)I(x,t), \quad x\in\Omega,\\[2ex]
\dfrac{\partial R(x,t)}{\partial t}- k_R \Delta R(x,t)
=\gamma I(x,t)- \mu R(x,t), \quad x\in\Omega,\\[2ex]
\displaystyle \frac{\partial S(x,t)}{\partial \nu}
=\frac{\partial E(x,t)}{\partial \nu}
=\frac{\partial I(x,t)}{\partial \nu}=0, 
\frac{\partial R(x,t)}{\partial \nu}=0, 
\quad x\in\partial\Omega,
\end{array}
\right.
\end{equation}
$t>0$. Here, function $g$ takes the following form:
$$ 
g(\tau) = \dfrac{1}{h}, \quad h>0.
$$
The basic reproduction number $R_{0}$ is given by 
$$ 
R_{0} = \dfrac{ \beta b \alpha}{ \mu(\mu+\alpha)( \mu + c + \gamma)}.
$$
We consider the following initial conditions:
$$ 
S(0,x)= 50, \  E(0, x)= 2, \  I(\theta, x)= 8 
\mbox{ and } \ R(0, x)=5, 
$$
where $\theta \in [-\tau, 0]$ and $x\in[0,50]$.
We first focus on a one-dimensional domain, which can be taken, 
without loss of generality, as being $[0, 50]$.
	
In order to solve numerically the considered system, 
we have used the method of centered finite differences 
to approximate the Laplacian, 	
$$ 
\dfrac{\partial^2 u}{\partial x^2}
=\dfrac{u^n_{i+1,j}- 2u^n_{i,j}+u^n_{i-1,j}}{\Delta x^2}
$$
with $\Delta x$ being the space discretization step and $u$ a given function.
We choose this method because it gives a precision of order 2 in space. 
Temporal discretization is performed using an explicit scheme.
The approximation of the term 
$$
\int_0^{h}S(x,t)I(x,t-\tau) d \tau
$$ 
is performed using the method of rectangles. Homogeneous Neumann boundary conditions (zero flux) 
are also approached by the method of centered finite differences in order to not lose 
the order of convergence of our scheme. The graphical visualization of numerical solutions, 
in space and time, was carried out using  $\mbox{{\sc Matlab}}^{\tiny \circledR}$. 
The used values of the parameters of the model are $b=5$, $\mu=0.1$, $\gamma=0.02$, 
$k_{S}=k_E=k_{I}=k_R=0.001$, $\beta=0.004$, $c=0.001$, $h=0.0001$, and $\alpha=0.1$;
and we take  $x \in [0,50]$ and $t \in [0, 300]$. 
The obtained results are shown in Figures~\ref{figure1}--\ref{figure4}.

\begin{figure}[ht!]
\begin{minipage}[b]{0.5\linewidth}
\begin{center}
\includegraphics[scale=0.55]{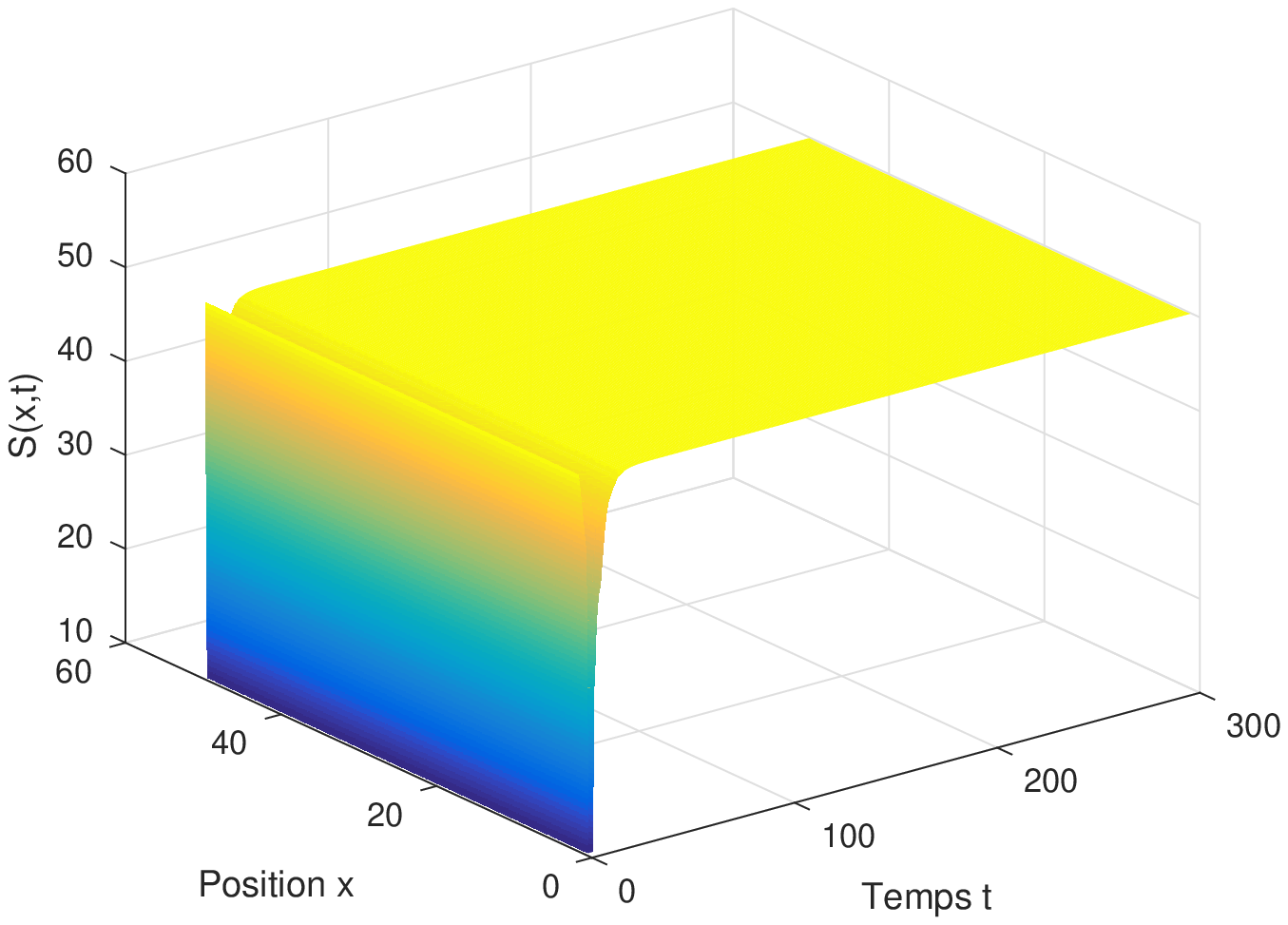}
\end{center}
\centering 
\end{minipage}\hfill
\begin{minipage}[b]{0.5\linewidth}
\begin{center}
\includegraphics[scale=0.55]{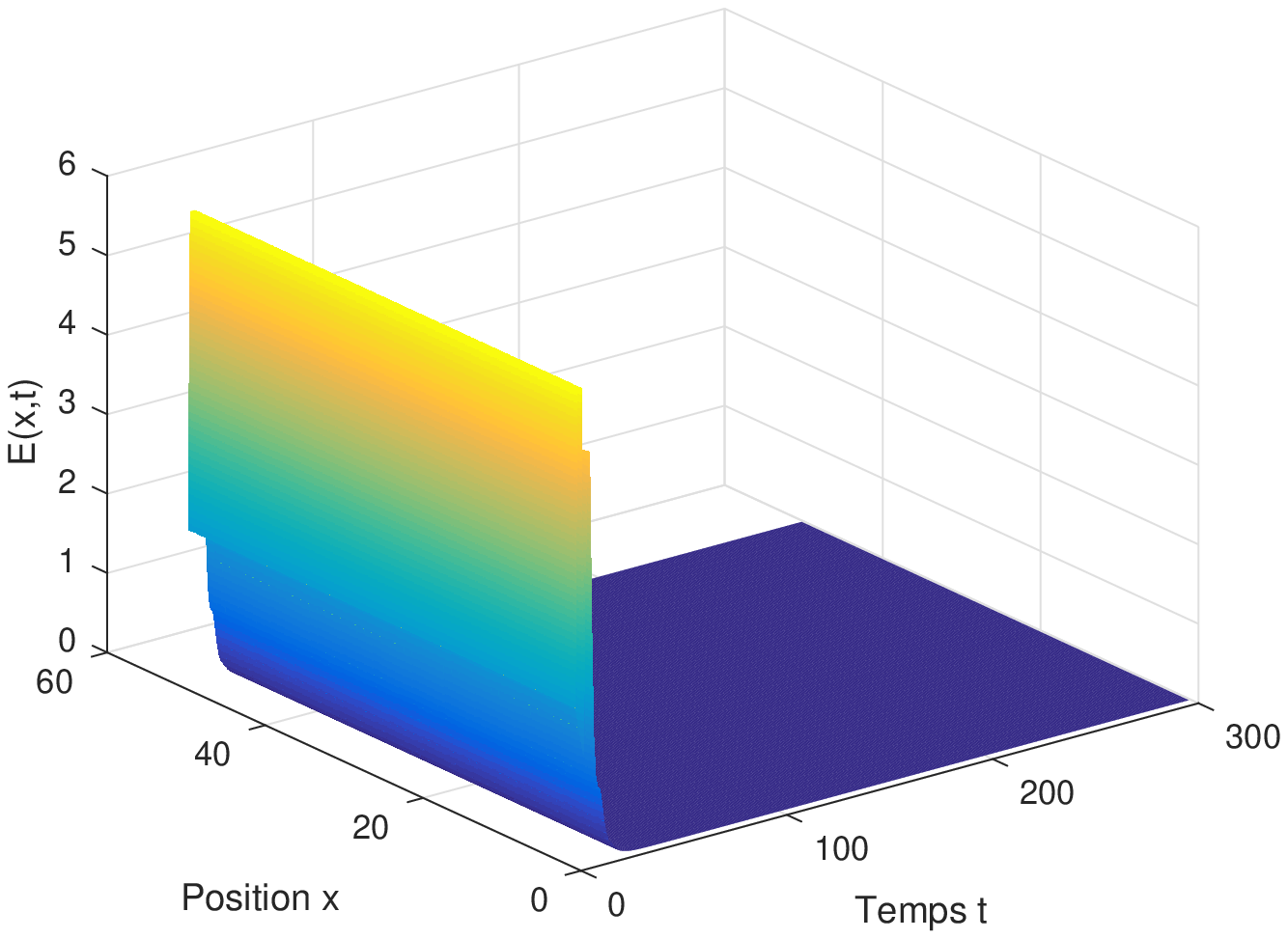}
\end{center}
\centering 
\end{minipage}
\captionof{figure}{ \label{figure1}  
Evolution of $S$ and $E$ of model  \eqref{example} for $x$ fixed 
and with the parameters described in Section~\ref{sec:4}. 
In this case, $R_{0}=0.8264< 1$.}	
\end{figure}
\begin{figure}[ht!]
\begin{minipage}[b]{0.5\linewidth}
\begin{center}
\includegraphics[scale=0.42]{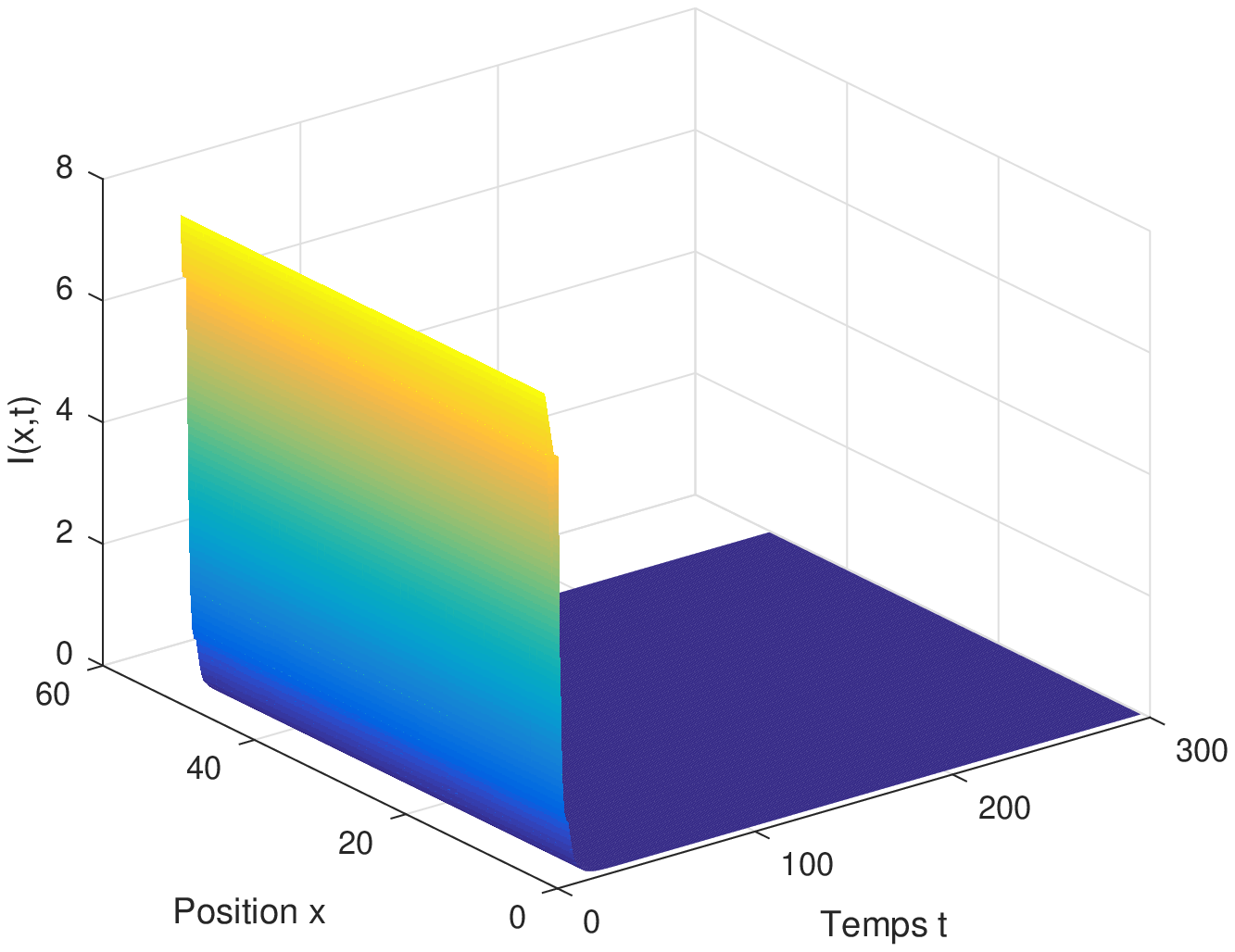}
\end{center}
\end{minipage}
\begin{minipage}[b]{0.5\linewidth}
\begin{center}
\includegraphics[scale=0.42]{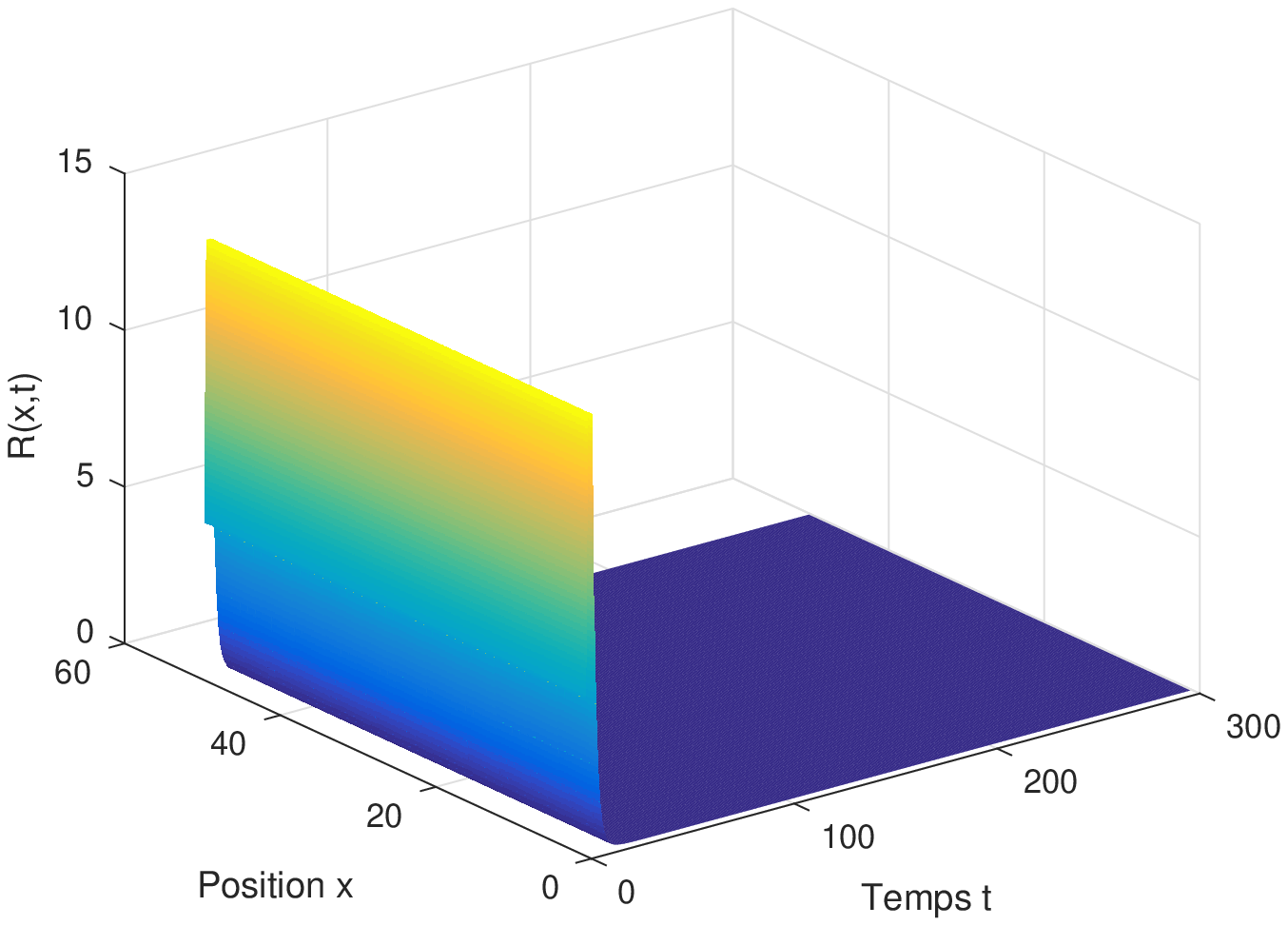}
\end{center}
\end{minipage}
\captionof{figure}{ \label{figure2} 
Evolution of $I$ and $R$ of model \eqref{example} for $x$ fixed  
and with the parameters described in Section~\ref{sec:4}. 
In this case, $R_{0}=0.8264< 1$.}	 	
\end{figure}
\begin{figure}[ht!]
\begin{minipage}[b]{0.5\linewidth}
\begin{center}
\includegraphics[scale=0.42]{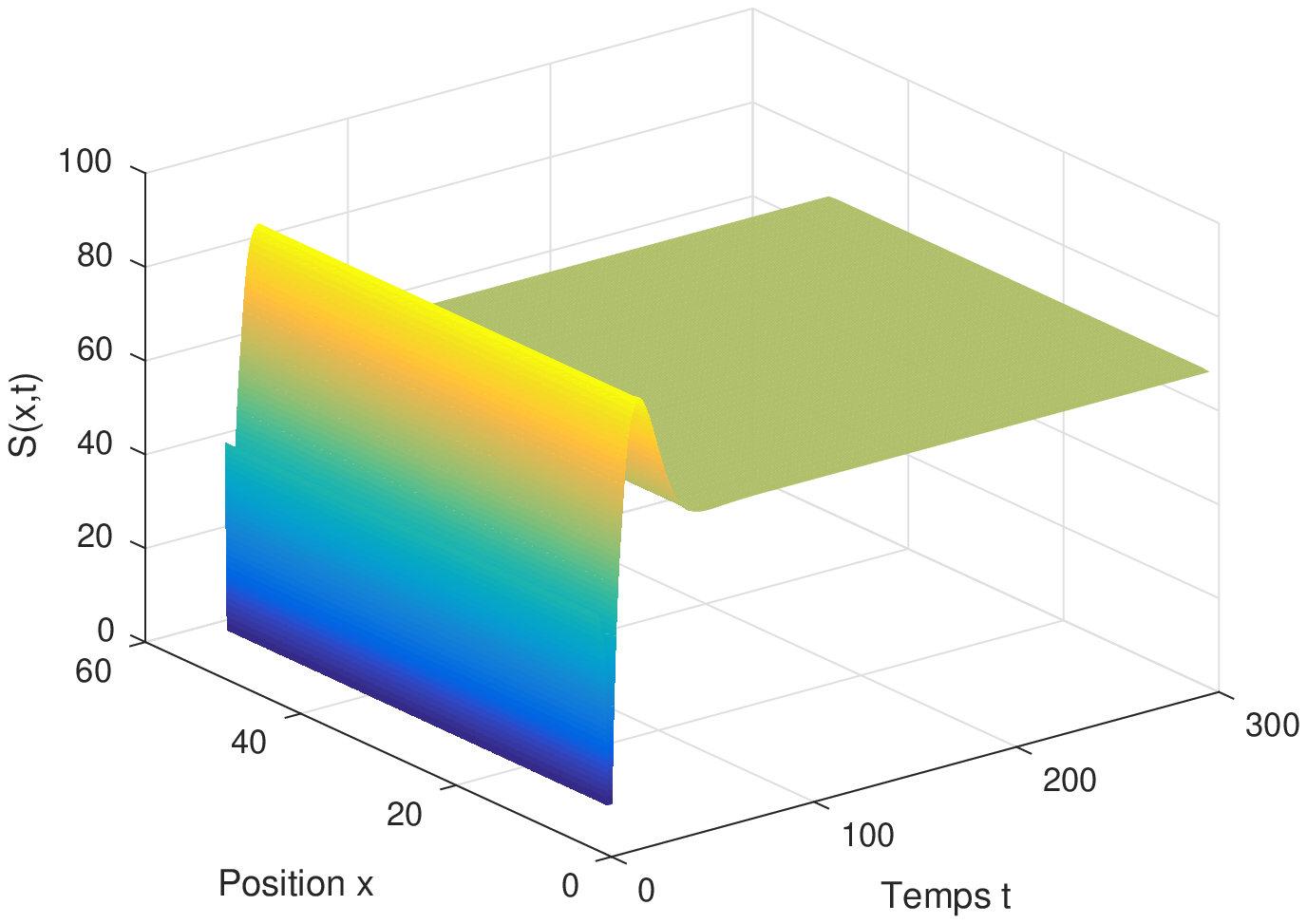}
\end{center}
\end{minipage}
\begin{minipage}[b]{0.5\linewidth}
\begin{center}
\includegraphics[scale=0.42]{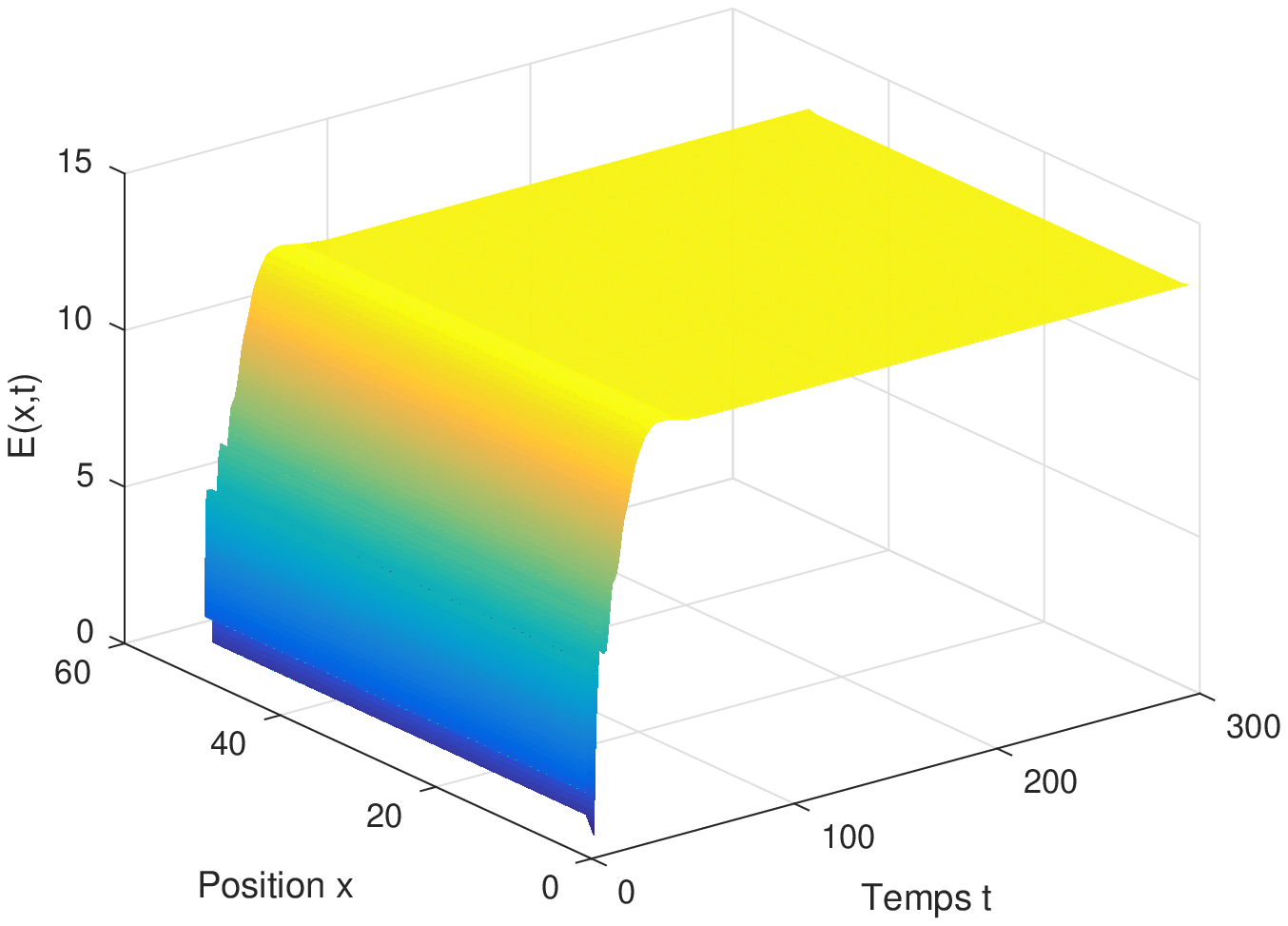}
\end{center}
\centering 
\end{minipage}\hfill
\captionof{figure}{ \label{figure3}  
Evolution of $S$ and $E$ of model \eqref{example}  
with the parameters described in Section~\ref{sec:4}, 
excepting $\mu =0.03$ and $\alpha=0.2$. In this case, $R_{0}=11.3669> 1$.}	
\end{figure}
\begin{figure}[ht!]
\begin{minipage}[b]{0.5\linewidth}
\begin{center}
\includegraphics[scale=0.42]{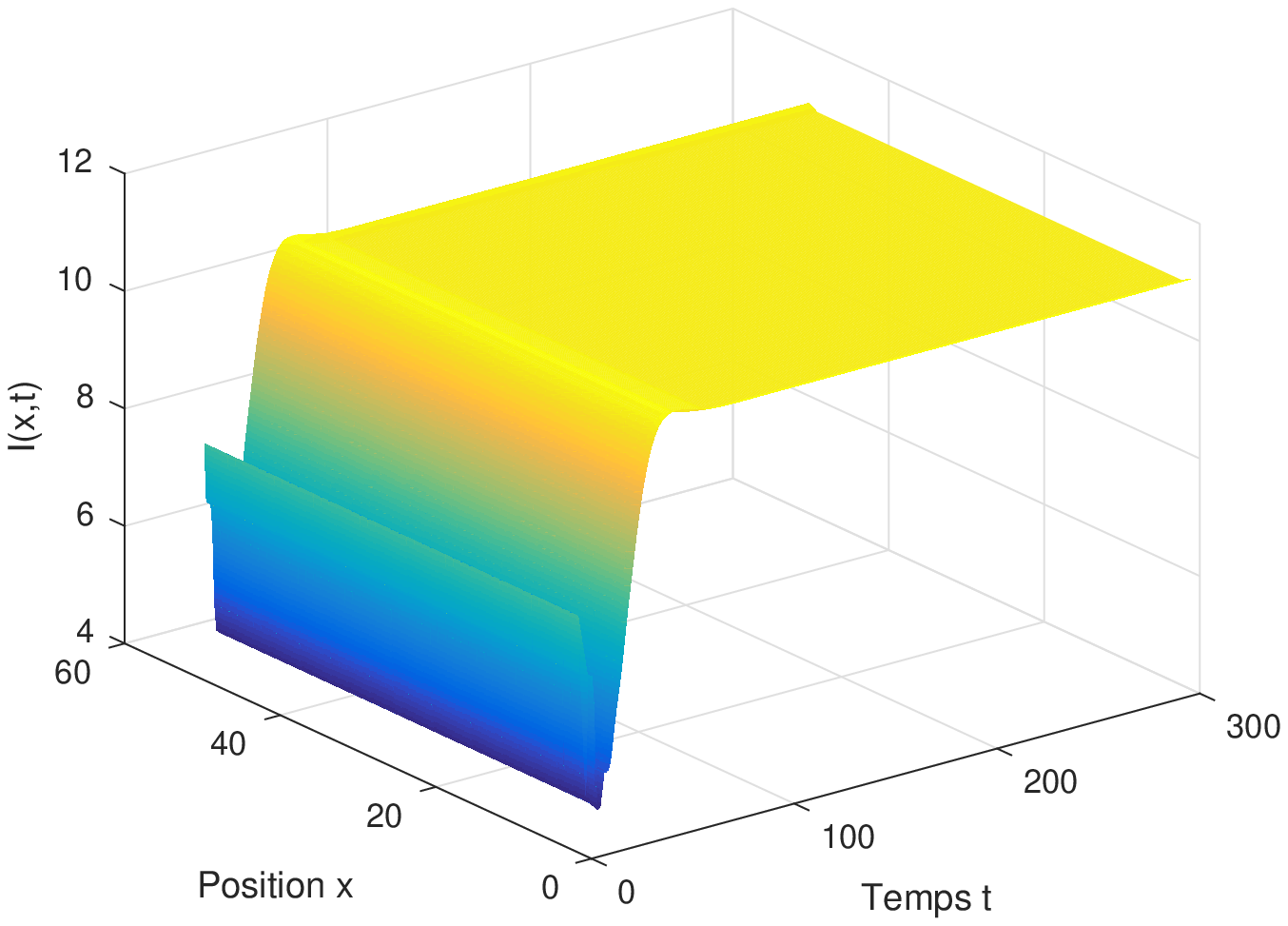}
\end{center}
\centering 
\end{minipage}
\begin{minipage}[b]{0.5\linewidth}
\begin{center}
\includegraphics[scale=0.42]{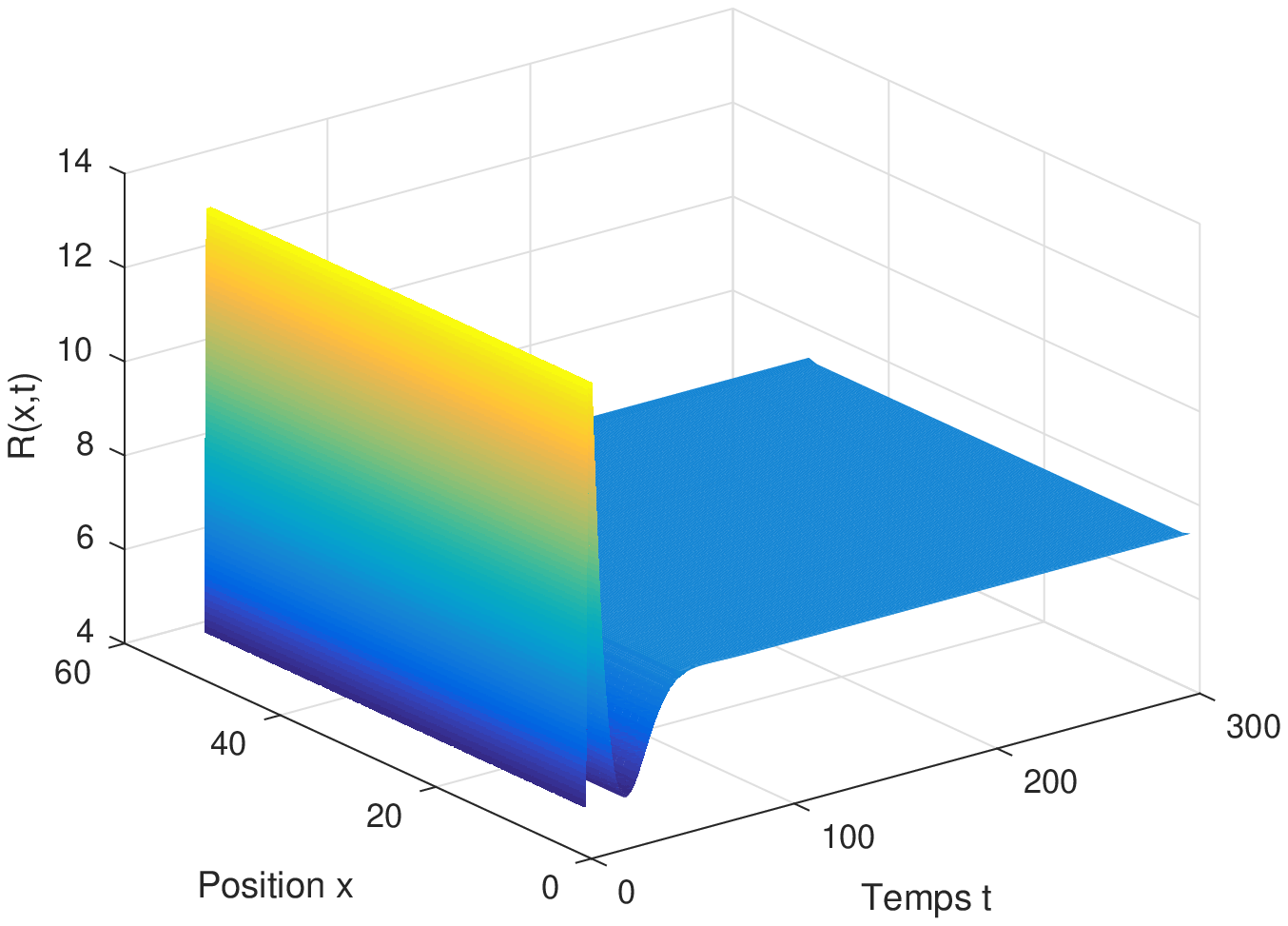}
\end{center}
\end{minipage}
\captionof{figure}{ \label{figure4} 
Evolution of $I$ and $R$ of model \eqref{example}  
with the parameters described in Section~\ref{sec:4},
excepting $\mu =0.03$ and $\alpha=0.2$. In this case, $R_{0}=11.3669> 1$.}	
\end{figure}

From Figures~\ref{figure1} and \ref{figure2}, we note that the solution 
$(S(t), E(t), I(t),R(t))$ of system \eqref{example} converges to the 
free disease equilibrium $E_0=\left(50,0, 0,0\right)$. In other words, 
$E_0$ is globally asymptotically stable. From a biological point of view, 
if $R_{0} \leq 1$, then the infection can be eradicated from the population. 
In addition, from Figures~\ref{figure3} and \ref{figure4},  we can draw 
the following conclusion: for $ R_{0}> 1 $, the solution   
$(S(t), E(t), I(t), R(t))$ of model \eqref{example} converges to the endemic 
equilibrium $E^*$. Thus, the unique endemic equilibrium is globally asymptotically stable, 
which biologically means that the infection persists but is controlled.

It should be noted that our numerical simulations can be 
performed, without any difficulty, for dimension two in space.
	

\section{Concluding remarks}
\label{sec:5}

We have studied the qualitative behavior of solutions of a 
reaction-diffusion system with distributed delay and a general 
nonlinear incidence function. We have shown that the model exhibits 
two equilibria: a disease-free equilibrium $E_0$ and an endemic equilibrium $E^*$. 
Under some assumptions on the incidence function, we have shown that 
the global dynamics of the model is completely determined by  
the basic reproduction number $R_0$. More precisely, we have proved 
that $R_0$ serves as a threshold parameter for the persistence and extinction of the disease.  
Since the coefficients of the system \eqref{Model} are all constants, 
we took the advantages of the method of Lyapunov functions to obtain 
the global dynamics of the considered model, showing that the disease 
free equilibrium state is globally asymptotically stable for $R_0\leq 1$. 
When $ R_0 > 1$, then we proved that there is a unique disease 
endemic equilibrium, which is globally asymptotically stable. 
Epidemiologically, this means that the disease will die out or will persist 
in the population depending on the values of the parameters of the model.
	

\section*{Acknowledgment}

Torres was supported by FCT within project UIDB/04106/2020 (CIDMA).




\begin{thebibliography}{xx}

\expandafter\ifx\csname url\endcsname\relax
\def\url#1{\texttt{#1}}\fi
\expandafter\ifx\csname urlprefix\endcsname\relax\def\urlprefix{URL }\fi
\expandafter\ifx\csname href\endcsname\relax
\def\href#1#2{#2} \def\path#1{#1}\fi
	
\bibitem{SAMSUZZOHA20103461}
M.~Samsuzzoha, M.~Singh, D.~Lucy, 
Numerical study of an influenza epidemic model with diffusion, 
Applied Mathematics and Computation 217~(7) (2010) 3461--3479.
	
\bibitem{SAMSUZZOHA20115507}
M.~Samsuzzoha, M.~Singh, D.~Lucy, 
Numerical study of a diffusive epidemic model of influenza 
with variable transmission coefficient, 
Applied Mathematical Modelling 35~(12) (2011) 5507--5523.
	
\bibitem{Bai2018}
Z.~Bai, R.~Peng, X.~Q. Zhao, 
A reaction-diffusion malaria model with seasonality and incubation period, 
Journal of Mathematical Biology 77~(1) (2018) 201--228.
	
\bibitem{Banerjee2016}
M.~Banerjee, V.~Volpert, 
Spatio-temporal pattern formation in Rosenzweig--MacArthur model: 
Effect of nonlocal interactions, 
Ecological Complexity 30 (2017) 2--10, 
Dynamical Systems In Biomathematics.
\newblock \href {https://doi.org/10.1016/j.ecocom.2016.12.002}
{\path{doi:10.1016/j.ecocom.2016.12.002}}.
	
\bibitem{Hwang2013}
T.~W. Hwang, F.~B. Wang, 
Dynamics of a dengue fever transmission model with crowding effect 
in human population and spatial variation, 
Discrete \& Continuous Dynamical Systems -- B 18 (2013) 147--161.
	
\bibitem{Lou2011}
Y.~Lou, X.~Q. Zhao, 
A reaction-diffusion malaria model with incubation period in the vector population, 
Journal of Mathematical Biology 62~(4) (2011) 543--568.
	
\bibitem{Wang2018}
S.~Wang, 
Threshold dynamics of an {SIR} epidemic model 
with nonlinear incidence rate and non-local delay effect,
Wuhan University Journal of Natural Sciences 23~(6) (2018) 503--513.
	
\bibitem{Xu2017}
Z.-t. Xu, D.-x. Chen, 
An {SIS} epidemic model with diffusion, 
Appl. Math. J. Chinese Univ. Ser. B 32~(2) (2017) 127--146.
\newblock \href {https://doi.org/10.1007/s11766-017-3460-1}
{\path{doi:10.1007/s11766-017-3460-1}}.
	
\bibitem{Nauman2019}
N.~Ahmed, Z.~Wei, D.~Baleanu, M.~Rafiq, M.~A. Rehman, 
Spatio-temporal numerical modeling of reaction-diffusion measles epidemic system, 
Chaos: An Interdisciplinary Journal of Nonlinear Science 29~(10) (2019) 103101.

\bibitem{Jinliang2017}
T.~Kuniya, J.~Wang, 
Lyapunov functions and global stability for a spatially diffusive {SIR} epidemic model, 
Applicable Analysis 96~(11) (2017) 1935--1960.
	
\bibitem{KIM20131992}
K.~I. Kim, Z.~Lin, Q.~Zhang, 
An {SIR} epidemic model with free boundary,
Nonlinear Analysis: Real World Applications 14~(5) (2013) 1992--2001.
	
\bibitem{CAPASSO197843}
V.~Capasso, G.~Serio, 
A generalization of the {Kermack-McKendrick} deterministic epidemic model, 
Mathematical Biosciences 42~(1) (1978) 43--61.
	
\bibitem{Abdesslem}
A.~Elazzouzi, A.~Lamrani~Alaoui, M.~Tilioua, D.~F.~M. Torres, 
Analysis of a {SIRI} epidemic model with distributed delay and relapse, 
Statistics, Optimization and Information Computing 7 (2019) 545--557.
{\tt arXiv:1812.09626}
	
\bibitem{enatsu2012lyapunov}
Y.~Enatsu, 
Lyapunov functional techniques on the global stability of equilibria 
of {SIS} epidemic models with delays, 
Kyoto Univ. Res. Inf. Repository 1792 (2012) 118--130.
	
\bibitem{MR2762609}
A.~Kaddar, 
Stability analysis in a delayed {SIR} epidemic model 
with a saturated incidence rate, 
Nonlinear Anal. Model. Control 15~(3) (2010) 299--306.
	
\bibitem{MR2130156}
A.~Korobeinikov, P.~K. Maini, 
A {L}yapunov function and global properties for {SIR} and {SEIR} 
epidemiological models with nonlinear incidence, 
Math. Biosci. Eng. 1~(1) (2004) 57--60.
	
\bibitem{MR2879132}
A.~Lahrouz, L.~Omari, D.~Kiouach, A.~Belma\^ati, 
Complete global stability for an {SIRS} epidemic model 
with generalized non-linear incidence and vaccination, 
Appl. Math. Comput. 218~(11) (2012) 6519--6525.
	
\bibitem{WANG20102390}
J.~J. Wang, J.~Z. Zhang, Z.~Jin, 
Analysis of an {SIR} model with bilinear incidence rate, 
Nonlinear Analysis: Real World Applications 11~(4) (2010) 2390--2402.
	
\bibitem{gumel2003qualitative}
A.~B. Gumel, S.~M. Moghadas, 
A qualitative study of a vaccination model with non-linear incidence, 
Applied Mathematics and Computation 143 (2003) 409--419.
	
\bibitem{LI2014100}
J.~Li, G.~Q. Sun, Z.~Jin, 
Pattern formation of an epidemic model with time delay, 
Physica A: Statistical Mechanics and its Applications 403 (2014) 100--109.
	
\bibitem{Ruan2003135}
S.~Ruan, W.~Wang, 
Dynamical behavior of an epidemic model with a nonlinear incidence rate, 
Journal of Differential Equations 188 (2003) 135--163.

\bibitem{Elazzouzi}
A.~Elazzouzi, A.~Lamrani~Alaoui, M.~Tilioua, A.~Tridane, 
Global stability analysis for a generalized delayed {SIR} model 
with vaccination and treatment, 
Advances in Difference Equations 532~(2019) (2019).
	
\bibitem{Jing2011}
J.~Yang, S.~Liang, Y.~Zhang, 
Travelling waves of a delayed {SIR} epidemic model 
with nonlinear incidence rate and spatial diffusion, 
PLOS ONE 6 (2011) 1--14.
	
\bibitem{Xia2018}
W.~Xia, S.~Kundu, S.~Maitra, 
Dynamics of a delayed {SEIQ} epidemic model,
Advances in Difference Equations 2018 (2018) 336.
	
\bibitem{CONNELLMCCLUSKEY201564}
C.~C. McCluskey, Y.~Yang, 
Global stability of a diffusive virus dynamics model
with general incidence function and time delay, 
Nonlinear Anal. Real World Appl. 25 (2015) 64--78.
\newblock \href {https://doi.org/10.1016/j.nonrwa.2015.03.002}
{\path{doi:10.1016/j.nonrwa.2015.03.002}}.
	
\bibitem{junjie2020}
H.~Yang, J.~Wei, 
Dynamics of spatially heterogeneous viral model with time delay, 
Communications on Pure \& Applied Analysis 19 (2020) 85--102.
	
\bibitem{McCluskey2010837}
C.~C. McCluskey, 
Global stability of an {SIR} epidemic model with delay 
and general nonlinear incidence, 
Math. Biosci. Eng. 7~(4) (2010) 837--850.
\newblock \href {https://doi.org/10.3934/mbe.2010.7.837}
{\path{doi:10.3934/mbe.2010.7.837}}.
	
\bibitem{hale}
J.~K. Hale, 
Theory of functional differential equations, 
Part of the Applied Mathematical Sciences book series 
(AMS, volume 3), Springer-Verlag (1977).
	
\bibitem{hale2}
J.~K. Hale, 
Ordinary Differential Equations, 
Krieger publishing company Malabar, Florida, 1980.

\bibitem{Daniel1981}
D.~Henry, 
Geometric Theory of Semilinear Parabolic Equations, 
Springer, New York, 1981.
	
\end{thebibliography}
\end{document}